\newtheorem{lemm}{Lemma}
\newtheorem{prop}{Proposition}
\title{Implications of the Tradeoff between Inside and Outside Social Status in Group Choice\thanks{I am grateful to Akihiko Matsui, Daisuke Oyama, Masayuki Yagasaki, Ryuichi Tanaka, participants of 2017 Asian meeting of the Econometric Society at Hong Kong Chinese University, participants of 2016 Spring Meeting of the Japanese Economic Association at Nagoya University for their valuable comments. This project is financially supported by JSPS Grand-in-Aid for JSPS Research Fellows (17J07819). The previous title of this paper is ``The Effect of Inside and Outside Reputations on Group Choice and Effort''. All remaining errors are my own.}}
\author{Takaaki Hamada \thanks{Faculty of Management and Administration, Shumei University, 1-1 Daigakucho, Yachiyo-shi, Chiba 276-0003, Japan. Email: t.hamada.econ@gmail.com}}
\date{\today}
\begin{document}
\maketitle
\begin{abstract}
{\small
We investigate a group choice problem of agents pursuing social status. We assume heterogeneous agents want to signal their private information (ability, income, patience, altruism, etc.) to others, facing tradeoff between ``outside status'' (desire to be perceived in prestigious group from outside observers) and ``inside status'' (desire to be perceived talented from peers inside their group). To analyze the tradeoff, we develop two stage signaling model in which each agent firstly chooses her group and secondly chooses her action in the group she chose. They face binary choice problems both in group and action choices. Using cutoff strategy, we construct an partially separating equilibrium such that there are four populations: (i) choosing high group with strong incentive for action in the group, (ii) high group with weak incentive, (iii) low group with strong incentive, and (iv) low group with weak incentive. By comparative statics results, we find some spillover effects from a certain group to another, on how four populations change, when a policy is taken in each group. These results have  rich implications for group choice problems like school, firm or residential preference.}\\ \\
{\bf Keywords}: Signaling, Social Status, Group Choice, Spillover Effect\\
{\bf JEL classification}: D82, L14, Z13
\end{abstract}

\newpage
\section{Introduction}
In our real world many groups are organized. Examples include friend circles, colleges, residential areas, sports teams, firms or criminal organizations like that. When people choose group, people are often concerned with two things;  ``outside status'' (desire to be perceived in prestigious group from outside observers) and ``inside status'' (desire to be perceived talented from peers inside their group). Take ``college choice" as example. Students may want to go to high respected college since they can be considered as having high ability by people outside the college. On the other hand, they may also prefer colleges where they are perceived as more excellent than other students by people inside the college. Almost all students confront the tradeoff since their abilities are limited.
\if0
People's group choice may have a great influence on their performance in their groups. Consider a student who are motivated to gain outside status, that is who just wants to go to high respected college. Then she may choose a college where she will feel discouraged since other students have higher ability than her, so that she may not exert high performance there. A group choice has a large  people's later performance. It is important to human capital accumulation in college or human capital reallocation in firms that people can choose ``right" group.
\fi

In this paper we analyze effects of the tradeoff on agents' group choice and performance. We develop two stage signaling model in which each player firstly chooses her group, High or Low, and secondly chooses her action in the group, making an effort or not, with caring status payoffs. Each player has a private information, which is inferred through their group choice and action by spectators. We assume that while each player's group choice is observed by all other players, each player's effort level is observed only by people who are in the same group. 

By cutoff strategy, we construct an partially separating equilibrium such that four kinds of populations emerge: (i) choosing high group with strong incentive for action in the group, (ii) high group with weak incentive, (iii) low group with strong incentive, and (iv) low group with weak incentive. By comparative statics results, we find some spillover effects from a certain group to another, on how four populations change, when a policy is taken in each group. Let us consider one example. Suppose more agents in high group are motivated to make efforts by some policies. Then low-ability agents who get still lazy in high group become more miserable in the group since their inside status are getting low. As a result, they give up gaining the high outside status, and move to lower group. On the other hand, that makes low group members motivated since new comers' ability are relatively high in low group, and they get higher social status by pooling them. Our results can be applied to many real group choice problems, and also gives us new empirical implication for related phenomena.

\if0
Suppose low-group members are motivated to make an effort by some policies. Then more low-group members want to make an effort since the return of an effort increases. Moreover, some bottom members of high group want to move to low group. Then due to the increase of members with higher ability, inside reputation rises in low group and that makes more members in low group perform well. However, since more low ability members get to make an effort, honor to make an effort in low group decreases so much. Thus, this makes the top members in low group give an incentive to move high group. This worsens the stigma not to make an effort in high and then some agents get to make an effort in high group.
\fi

In phycology and sociology, these two status concerns are considered to be important when people choose a group and their behaviors after joining a group. For example Tyler and Lind (1992) propose the group value model, where inside and outside status (in their paper, {\it pride} and {\it respect}) shape self-esteem and their behavior in groups. Heather et al. (1996), in their experiment, show both pride and respect are significantly related to self-esteem and their behavior in groups. Moreover, Branscombe et al. (2002) execute an experiment to show the importance of interaction between two components.

In economics these status concerns have been introduced to group formation model in which people have a preference for group members and it's argued that what kind of group are stable in some sense ( Milchtaich and Winter (2001), Watts (2007), Lararova and Dimitrov (2013)). Moreover, some papers discuss the effect of inside status on people's behavior in a given group (Benabou and Tirole (2006, 2011), Adriani and Sonderegger (2019)). None of these studies can incorporate both people's group choice and later performance, and my work is the first one which describe the interaction of group choice and later effort as long as I know.

This paper is organized as follows. Section $2$ describe our model, conducts equilibrium analysis and comparative statics, and Section $3$ presents applications and discuss the implication of our model. Section $4$ concludes.

\section{The Model}
There are two kinds of player; agents who choose a group and an action, and spectators who observe choices taken by agents. Agents are $[0,1]$ continuum and characterized by each type $\theta$ (ability, income, altruism, etc.), which is agents' private information. $\theta$ is distributed according to the density $f$: $[0,1]\rightarrow \mathbb{R_{+}}$ with mean $\overline{\theta}$. Agents firstly choose a group $g\in \{h, l\}$ they join, and then choose an action $a\in \{0,1\}$ in each group. In the example of school choice, $\theta$ is the ability for activities in school, and $h$ corresponds to high-ranking school, and $l$ corresponds to low-ranking school. Moreover, $a=1$ is to make an effort in school and $a=0$ is to get lazy. 

We assume each agent's preference depends on material utility and social status concerns. Using honor-stigma model in B\'{e}nabou and Tirole $(2006, 2011)$, our model is described as follows.
\begin{eqnarray}\label{gc}
\max_{g\in\{l, h\}} V(a^{*}, g, \theta)-c(g, \theta)+\mu_{O}\{\mathbb{E}[\theta|g]-\overline{\theta}\} \label{eq:a}
\end{eqnarray}
where
\begin{eqnarray}\label{ac}
V(a^{*}, g, \theta)\equiv \max_{a\in \{0, 1\}}v(a,g)-d(a, g, \theta)+\mu_{I}\{\mathbb{E}[\theta|a,g]-\overline{\theta}_g \}, \label{eq:b}
\end{eqnarray}
with $\overline{\theta}_g$ average type of group $g$. First, $(\ref{ac})$ is on action choice after joining a group $g$. Given a group $g$, $V(a^{*}, g, \theta)$ is a value function of type $\theta$'s action choice problem, and $a^{*}$ is the optimal action. $v(a,g)$ and $d(a, g, \theta)$ is a standard economic benefit and cost of choosing $a$ in a group $g$, respectively. The cost function is continuously differentiable, and satisfies the basic signaling properties; $\frac{\partial}{\partial \theta}d(a, g, \theta)<0$ for all $a$, $g$, $\theta$ and $\frac{\partial }{\partial \theta}d(1, g, \theta)<\frac{\partial }{\partial \theta}d(0, g, \theta)$ for all $g$, $\theta$.\footnote{The later one is rewritten as $\frac{\partial }{\partial \theta}[d(1, g, \theta)-d(0, g, \theta)]<0$.} The higher type agents are, the less cost they incur when to take an action. And the difference of cost between $a=1$ and $a=0$ is smaller in higher type.

The term $\mu_{I}\{\mathbb{E}[\theta|a,g]-\overline{\theta}_g \}$ is a psychological payoff from inside status, which depends on the difference between each agent's perceived type $\mathbb{E}[\theta|a,g]$ and average type in a group $g$. $\mu_I(>0)$ is agents' sensitivity to the inside status, and we assume it's common among agents.\footnote{B\'{e}nabou and Tirrole (2006, 2011) consider this parameter as ``identity" of group or community. It can be different from group to group. Moreover, they also think of it as the ``observability'', that is the degree of how easy spectators can observe agents' actions.} When $\mathbb{E}[\theta|a,g]>\overline{\theta}_g$ is positive, agents is subjected to ``honor'', and otherwise, ``stigma''. We can see this form expresses positional utility in a group. In fact, B\'{e}nabou and Tirrole (2006, 2011) do not use this formulation and their form is just $\mu_{I}\mathbb{E}[\theta|a,g]$, since $\overline{\theta}_g$ has no effect on agents' action choice. But in our two stage model, it critically matters to the group choice. If we except $\overline{\theta}_g$, it holds in an equilibrium that $\mu_{I}\{\mathbb{E}[\theta|a=1,g=l]<\mu_{I}\{\mathbb{E}[\theta|a=0,g=h]$ since higher types choose high-ranking group by signaling property. In this form, however, although agents choosing $g=h$ and $a=0$ are perceived relatively low type by group members, they get higher psychological payoffs on inside status than agents who exert relatively high performance $(a=1)$ in $l$ group. This does not describe the trade between inside and outside status. By using positional utility, $\mu_{I}\{\mathbb{E}[\theta|a=1,g=l]-\overline{\theta}_l\}>\mu_{I}\{\mathbb{E}[\theta|a=0,g=h]-\overline{\theta}_h\}$ can hold, from which the tradeoff arises.

Next, $(\ref{gc})$ is on group choice. $c(g, \theta)$ is the cost for joining group $g$ and, as in the cost function for action, it satisfies the signaling principal; $\frac{\partial }{\partial \theta}c(g, \theta)<0$ for any $g$, $\theta$ and $\frac{\partial }{\partial \theta}c(h, \theta)<\frac{\partial }{\partial \theta}c(l, \theta)$ for any $\theta$. $\mu_{O}\{\mathbb{E}[\theta|g]-\overline{\theta}\}$ is 
a psychological payoff from outside status, and $\mu_O(>0)$ is agents' sensitivity for outside status and is common among all agents. Note that outside status does not depend on an action $a$, while inside status does. We assume that spectators out of group cannot observe each agent's action in a group.\footnote{At the same time, each agent's inside status is formed by predictions of agents in a group. So agents can also be spectators.} In fact, this assumption does not change our equilibrium qualitatively, but this causes quantitative change. Our assumption implies the prediction of outside spectators become more rough, so that even around-bottom types in each group can get same outside status with types around top. This provides people with stronger incentive to join higher group, which results in more lazy actions in high group.

For the tractability, we do not consider group capacity constraint. On the other hands, it is essential for some phenomena such as school or firm choice. it should be discussed as a further analysis.

The sequence of game is as follows:

\begin{itemize}
\item[{\bf Stage 1}]: (i) Each agent's ability $\theta\in[0, 1]$ is realized and observed by themselves. \vspace{2mm} \\
 \;\;(ii) Each agent chooses group $g\in\{l, h\}$.\vspace{2mm} \\ 
 \;\;(iii) Spectators outside group observe agents' group choices, and update their belief on type for each agent.
\item[{\bf Stage 2}]:  (i) Each agent chooses action $a\in\{0, 1\}$ in the group $g$ chosen by her.\vspace{2mm} \\
 \;\;(ii) Spectators (agents) inside group observe agents' actions, and update their belief on type for each agent.\vspace{2mm} \\
 \end{itemize}
 
\subsection{Equilibrium Analysis}
An equilibrium concept is {\it perfect bayesian equilibrium}. In particular, as in B\'{e}nabou and Tirrole (2006, 2011) and Adriani and Sonderegger (2019), we focus on a cutoff equilibrium, where all agents' behaviors are characterized by a certain point. Let $\hat{\theta}$ be a group cutoff, where any types $\theta \geq \hat{\theta}$ join group $g$ and $\theta< \hat{\theta}$ join group $l$. Moreover, for each $g\in \{h, l\}$, let $\theta_{g}$ be a action cutoff, where any types $\theta \geq \theta_{g}$ take $a=1$ and $\theta< \theta_{g}$ take $a=0$ in each group. Figure \ref{ex} is an example, an we shall derive it as an equilibrium outcome.

\begin{figure}[h]
\begin{center}
\includegraphics[width=13cm,clip]{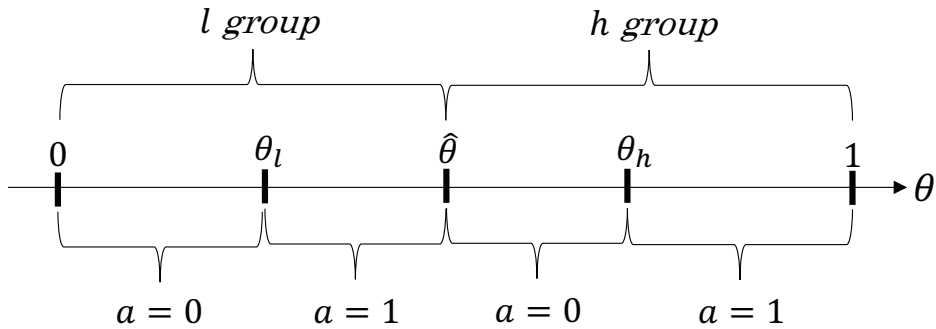}
\caption{An example of equilibrium outcome}\label{ex}
\end{center}
\end{figure}

\subsubsection{Optimal action choice}
 First backwardly, consider the problem $(\ref{eq:b})$ given a group cutoff $\hat{\theta}$. Optimal action of type $\theta$ in group $g$ is $a=1$ if
\begin{eqnarray}
v(1,g)-d(1, g, \theta)+\mu_I \mathbb{E}[\theta|a=1, g]\geq v(0,g)-d(0, g, \theta)+ \mu_I \mathbb{E}[\theta|a=0, g] \label{eq:c}
\end{eqnarray}

Let an equilibrium cutoff for action with a group cutoff $\hat{\theta}$ be $\theta_g^{*}$, and this satisfies
\begin{eqnarray}
d_{g}(\theta_g^{*})-v_{g}= \mu_I \phi_g(\theta_g^{*}), \label{eq:d}
\end{eqnarray}
where $v_{g}\equiv v(1,g)-v(0,g)$, $d_{g}(\theta_g)\equiv d(1, g, \theta_g)-d(0, g, \theta_g)$, and $\phi_g(\theta_g)\equiv \mathbb{E}[\theta|a=1, g]-\mathbb{E}[\theta|a=0, g]=\mathbb{E}_[\theta|\theta_g^{+}\geq \theta \geq \theta_g]-\mathbb{E}[\theta|\theta_g>\theta \geq \theta_g^{-}]$ with $\theta_g^{+}$ and $\theta_g^{-}$ being highest type and lowest type in group $g$, respectively.\footnote{$\theta_h^{+}=1$, $\theta_h^{-}=\theta_l^{+}=\hat{\theta}$, and $\theta_l^{-}=0$.}

\vspace{5mm}
 Let us consider the example that $f$ is the uniform distribution. Then we can calculate $\phi_h(\theta_h)=\frac{1-\hat{\theta}}{2}$ and $\phi_l(\theta_l)=\frac{\hat{\theta}}{2}$, which are constant in $\theta_{g}$. And consider $d_{h}(\theta_h)=\delta_{h}\cdot (1-\theta_{h})$ and $d_{l}(\theta_l)=\delta_{l}\cdot (1-\theta_{l})$ with $\delta_{h}>\delta_{l}>0$. Moreover, $v_{h}=v_{l}=0$. Then action cutoffs can be determined by Figure \ref{ex2}.
 
\begin{figure}[h]
\begin{center}
\includegraphics[width=17cm,clip]{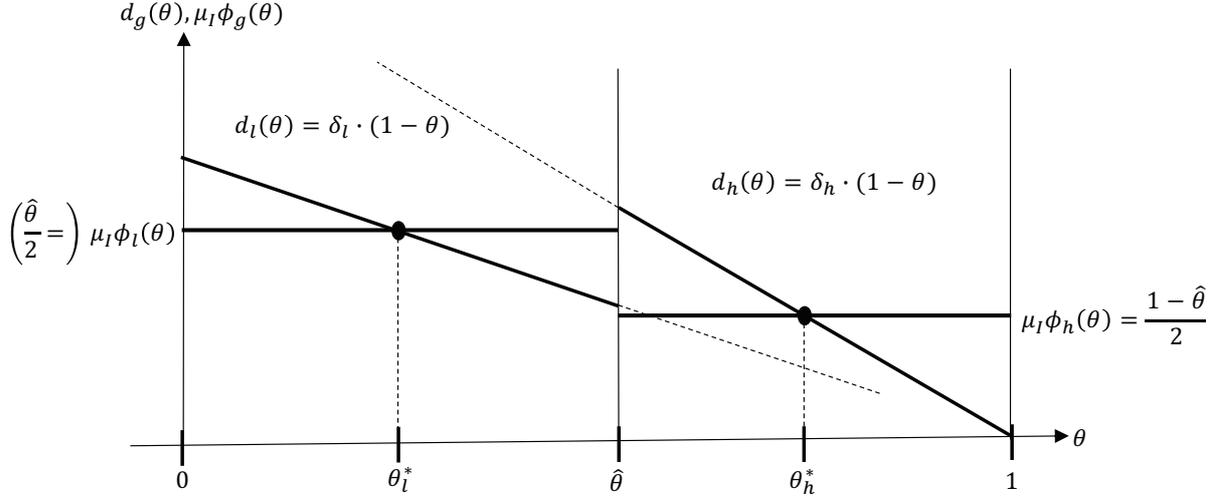}
\caption{A simple example of action cutoffs}\label{ex2}
\end{center}
\end{figure}

In this case, there exists an internal cutoff both in group $h$ and $l$. It's possible there is no internal cutoff, so that pooling equilibrium occurs in a group. Moreover in this case, if a cutoff exist, then it is unique. This comes from monotonicity of $d_{g}(\theta_g)$ and $\phi_g(\theta_g)$. 

\vspace{5mm}
In our assumption of cost function $d$, $d_g(\theta_g)$ is strictly decreasing. And if we consider the (weakly) monotone density $f$, then $\phi_g(\theta_g)$ is also (weakly) monotone as the next lemma shows.

\begin{lemm}$(Jewitt, 2004)$ If the density $f$ is $(a)$non-increasing, $(b)$non-decreasing, $(c)$constant, then $z(\tilde{\theta})\equiv \mathbb{E}[\theta|\theta^{+}\geq \theta \geq \tilde{\theta}]-\mathbb{E}[\theta|\tilde{\theta} \geq \theta \geq \theta^{-}]$ is $(a)$non-decreasing, $(b)$non-increasing, $(c)$constant in $\tilde{\theta} \in [\theta^{-},\theta^{+}]$.
\end{lemm}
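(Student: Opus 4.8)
The plan is to establish the claimed monotonicity of $z$ by signing its derivative in $\tilde\theta$. Since conditional means are invariant to rescaling the density, I would first normalize $f$ to a probability density on $[\theta^-,\theta^+]$ and write $F(s)=\int_{\theta^-}^s f$, so that $F(\theta^-)=0$ and $F(\theta^+)=1$. Writing $M_+(\tilde\theta)=\mathbb{E}[\theta\mid\theta^+\geq\theta\geq\tilde\theta]$, $M_-(\tilde\theta)=\mathbb{E}[\theta\mid\tilde\theta\geq\theta\geq\theta^-]$, and $D_+=\int_{\tilde\theta}^{\theta^+}f$, $D_-=\int_{\theta^-}^{\tilde\theta}f$, a quotient-rule computation gives
\[
M_+'(\tilde\theta)=\frac{f(\tilde\theta)}{D_+}\bigl(M_+(\tilde\theta)-\tilde\theta\bigr),\qquad M_-'(\tilde\theta)=\frac{f(\tilde\theta)}{D_-}\bigl(\tilde\theta-M_-(\tilde\theta)\bigr),
\]
both nonnegative because $M_-(\tilde\theta)\leq\tilde\theta\leq M_+(\tilde\theta)$. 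Hence $z'(\tilde\theta)=f(\tilde\theta)\bigl[(M_+-\tilde\theta)/D_+-(\tilde\theta-M_-)/D_-\bigr]$, and since $(M_+-\tilde\theta)D_+=\int_{\tilde\theta}^{\theta^+}(s-\tilde\theta)f(s)\,ds=:P$ and $(\tilde\theta-M_-)D_-=\int_{\theta^-}^{\tilde\theta}(\tilde\theta-s)f(s)\,ds=:Q$, the sign of $z'(\tilde\theta)$ equals that of $P/D_+^2-Q/D_-^2$. The whole problem thus reduces to comparing these two ratios.

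The key step is a change of variables to the mass scale. Putting $u=F(s)$ and $p=F(\tilde\theta)$, and letting $\psi(u)=(F^{-1})'(u)=1/f(F^{-1}(u))$ denote the quantile density, Fubini's theorem turns the numerators into $P=\int_p^1(1-u)\psi(u)\,du$ and $Q=\int_0^p u\,\psi(u)\,du$, while $D_+=1-p$ and $D_-=p$. Because $\int_p^1(1-u)\,du=(1-p)^2/2$ and $\int_0^p u\,du=p^2/2$, the two ratios are exactly
\[
\frac{P}{D_+^2}=\frac12\,\frac{\int_p^1(1-u)\psi(u)\,du}{\int_p^1(1-u)\,du},\qquad \frac{Q}{D_-^2}=\frac12\,\frac{\int_0^p u\,\psi(u)\,du}{\int_0^p u\,du},
\]
i.e. $\tfrac12\langle\psi\rangle_+$ and $\tfrac12\langle\psi\rangle_-$, the weighted averages of $\psi$ over $[p,1]$ and over $[0,p]$ with the triangular weights $(1-u)$ and $u$ respectively.

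It remains to compare these two averages, and here the shape assumption on $f$ enters. Since $F^{-1}$ is increasing, $f$ non-increasing is equivalent to $\psi$ non-decreasing, $f$ non-decreasing to $\psi$ non-increasing, and $f$ constant to $\psi$ constant. When $\psi$ is non-decreasing, every value it takes on $[p,1]$ is at least $\psi(p)$ and every value on $[0,p]$ is at most $\psi(p)$, so $\langle\psi\rangle_+\geq\psi(p)\geq\langle\psi\rangle_-$; thus $z'\geq0$ and $z$ is non-decreasing, which is case $(a)$. Reversing the inequalities gives case $(b)$, and constancy of $\psi$ gives $z'\equiv0$, which is case $(c)$, recovering the uniform computation seen above.

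I expect the main obstacle to be locating this quantile change of variables: worked directly in $\theta$-space the quantity $P/D_+^2-Q/D_-^2$ is an awkward rational expression whose sign is not transparent, whereas on the mass scale it collapses to a comparison of averages of a single monotone function over two disjoint intervals, after which the conclusion is immediate. A secondary technical point is regularity — the transform presumes $f>0$ on $(\theta^-,\theta^+)$ so that $F^{-1}$ is differentiable; the general monotone case then follows by a routine approximation of $f$ by strictly positive monotone densities and passage to the limit.
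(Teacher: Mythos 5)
Your proof is correct, but there is nothing in the paper to compare it against: the paper does not prove this lemma at all. It is imported as a citation from Jewitt (2004), an unpublished Oxford mimeo, and the paper's Appendix A proves two \emph{different} statements (its Lemmas 2 and 3: the derivative of $\phi_g$ with respect to the group cutoff $\hat{\theta}$ holding the action cutoff fixed, and the derivative of a truncated mean with respect to its own truncation point). Your argument therefore genuinely adds something — it makes the cited result self-contained. Its first half, the quotient-rule identities $M_+'(\tilde\theta)=\frac{f(\tilde\theta)}{D_+}\bigl(M_+(\tilde\theta)-\tilde\theta\bigr)$ and $M_-'(\tilde\theta)=\frac{f(\tilde\theta)}{D_-}\bigl(\tilde\theta-M_-(\tilde\theta)\bigr)$, is exactly the kind of computation the paper does carry out in Appendix A; the step the paper never takes (because it leans on the citation) is your reduction of the sign of $z'$ to comparing $P/D_+^2$ with $Q/D_-^2$, and the quantile change of variables $u=F(s)$ under which these become the triangular-weighted averages $\tfrac12\langle\psi\rangle_+$ and $\tfrac12\langle\psi\rangle_-$ of the quantile density $\psi=1/(f\circ F^{-1})$ over $[p,1]$ and $[0,p]$. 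I verified the Fubini identities $P=\int_p^1(1-u)\psi(u)\,du$ and $Q=\int_0^p u\,\psi(u)\,du$, the equivalence between monotonicity of $f$ and the opposite monotonicity of $\psi$, and the pivot comparison through $\psi(p)$; all three case orientations come out matching the statement (non-increasing $f$ yields non-decreasing $z$, etc.). Your closing regularity caveat is the right one and should be kept: a monotone density need not be differentiable and may vanish at an endpoint, so the identity $F^{-1}(u)-F^{-1}(p)=\int_p^u\psi(v)\,dv$ and the derivative formulas should be read as holding almost everywhere with $f>0$ on the interior (which the lemma implicitly requires anyway, since otherwise the conditional means defining $z$ are not well defined), with the general case recovered by the approximation you describe or by an absolute-continuity argument.
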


Thus, if we consider the monotone density, then an equilibrium action cutoff is at most $1$ in each group. In our situation, it may be possible to consider $f$ is decreasing function. The distributions of ability or income are decreasing in most part.

\subsubsection{Optimal group choice}

Since $\theta_g^*$ depends on $\hat{\theta}$, let us denote it as $\theta_g^*(\hat{\theta})$.
Now consider the stage 1. By $(\ref{eq:a})$, an optimal group of type $\theta$ is $g=h$ if
\begin{eqnarray}
V(a^*(\theta_h^*(\hat{\theta})), h, \theta)-c(h, \theta)+\mu_{O}\mathbb{E}[\theta|h]\geq V(a^*(\theta_l^*(\hat{\theta})), l, \theta)-c(l, \theta)+\mu_{O}\mathbb{E}[\theta|l] \label{eq:e}
\end{eqnarray}
where $a^*(\theta_g^*(\hat{\theta}))$ is type $\theta$'s optimal action when she is in a group $g$ and group cutoff is $\hat{\theta}$. Thus an equilibrium cutoff $\hat{\theta}^*$ satisfies
\begin{eqnarray}
\mu_{O}\phi(\hat{\theta}^*)=\tilde{c}(\hat{\theta}^*)-\tilde{V}(\hat{\theta}^*)\label{eq:f}
\end{eqnarray}
where $\tilde{c}(\hat{\theta})\equiv c(h, \hat{\theta})-c(l, \hat{\theta})$, $\tilde{V}(\hat{\theta})\equiv V(a^*(\theta_h(\hat{\theta})), h, \theta)- V(a^*(\theta_l(\hat{\theta}))$, and $\phi(\hat{\theta})\equiv \mathbb{E}[\theta|h]-\mathbb{E}[\theta|l]=\mathbb{E}[\theta|1\geq \theta \geq \hat{\theta}]-\mathbb{E}[\theta|\hat{\theta}\geq \theta \geq 0]$. By the assumption of cost function $c$ and Lemma $1$, as in the analysis of action cutoff, we can see an equilibrium group cutoff is at most $1$ if the density is a monotone function.

\vspace{3mm}
We use belief refinement by applying Cho and Kreps (1987)'s criterion to exclude the unrealistic pooling equilibria. Moreover, for comparative statics analysis, we need to guarantee the stability of our equilibrium. If an equilibrium is not stable, an equilibrium cutoff diverges by deviations  and comparative statics are nonsense. The details are discussed in Appendix. We have the following proposition.

\begin{prop}$($Existence, Uniqueness and Stability of partially-separating equilibrium$)$\\
There exists a partially-separating equilibrium such that $\theta_h^*(\hat{\theta}^*)$, $\hat{\theta}^*$ and $\theta_l^*(\hat{\theta}^*)$ exist in $[0,1]$ and $1>\theta_h^*(\hat{\theta}^*)>\hat{\theta}^*>\theta_l^*(\hat{\theta}^*)>0$ holds, that is\vspace{3mm} \\
$\bullet$Type $\theta\in [0, \theta_l^*(\hat{\theta}^*)]$ join $l$ and choose $a=0$, \vspace{2mm} \\
$\bullet$Type $\theta\in [\theta_l^*(\hat{\theta}^*), \hat{\theta}^*]$ join $l$ and choose $a=1$,\vspace{2mm} \\
$\bullet$Type $\theta\in [\hat{\theta}^*, \theta_h^*(\hat{\theta}^*)]$ join $h$ and choose $a=0$,\vspace{2mm} \\
$\bullet$Type $\theta\in [\theta_h^*(\hat{\theta}^*), 1]$ join $h$ and choose $a=1$. \vspace{3mm}\\
 If this equilibrium exists, then the equilibrium outcome is unique with belief refinement D1. Moreover, an equilibrium is stable if $d_{g}'(\theta_g^*)-\mu_I\phi_g'(\theta_g^*)<0 $ for each $g \in \{h,l\}$ and $\tilde{c}'(\hat{\theta}^*)-\tilde{V}'(\hat{\theta}^*)-\mu_O\phi'(\hat{\theta}^*)<0$.
\end{prop}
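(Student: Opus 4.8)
The plan is to decompose the equilibrium into the two cutoff conditions (\ref{eq:d}) and (\ref{eq:f}), treat the action cutoffs as functions of the group cutoff, and reduce existence to a one-dimensional fixed-point problem in $\hat{\theta}$. First I would fix an arbitrary candidate group cutoff $\hat{\theta}\in(0,1)$ and study the action condition (\ref{eq:d}). Writing $\Psi_g(\theta_g;\hat{\theta})\equiv d_{g}(\theta_g)-v_{g}-\mu_I\phi_g(\theta_g)$, the cost assumption makes $d_{g}$ strictly decreasing and Lemma $1$ makes $\phi_g$ weakly monotone, so that on the relevant support ($[\hat{\theta},1]$ for $g=h$ and $[0,\hat{\theta}]$ for $g=l$) the function $\Psi_g$ is strictly decreasing in $\theta_g$. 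Hence $\Psi_g$ vanishes at most once, and the marginal type takes $a=1$ exactly when $\Psi_g\leq0$; an interior root $\theta_g^*(\hat{\theta})$ then exists iff $\Psi_g$ changes sign across the support. By construction any interior root satisfies $\theta_l^*(\hat{\theta})<\hat{\theta}<\theta_h^*(\hat{\theta})$ automatically, since $\theta_g^*$ lives in the support of group $g$; and the implicit function theorem gives differentiability of $\theta_g^*(\cdot)$ in $\hat{\theta}$ wherever $\Psi_g'\neq0$.

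Second, I would substitute $\theta_g^*(\hat{\theta})$ into the group condition (\ref{eq:f}) and set $\Xi(\hat{\theta})\equiv\tilde{c}(\hat{\theta})-\tilde{V}(\hat{\theta})-\mu_O\phi(\hat{\theta})$, which is continuous on $(0,1)$. Existence of $\hat{\theta}^*$ then follows from the intermediate value theorem once I verify that the end-point values $\Xi(0^{+})$ and $\Xi(1^{-})$ have opposite signs; this is precisely where the signaling structure of $c$ enters, since $\tilde{c}$ must dominate the bounded, positive outside-status gap $\mu_O\phi$ near the boundary where the cost comparison flips. The strict ordering $1>\theta_h^*(\hat{\theta}^*)>\hat{\theta}^*>\theta_l^*(\hat{\theta}^*)>0$ is obtained by ruling out the corner (pooling) solutions in each group, i.e. by checking $\Psi_h(\hat{\theta}^*)>0>\Psi_h(1)$ and $\Psi_l(0)>0>\Psi_l(\hat{\theta}^*)$, which say that the bottom of each group prefers $a=0$ while its top strictly prefers $a=1$.

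For uniqueness I would invoke the Cho and Kreps (1987) D1 criterion, as announced before the proposition. The role of D1 is to pin down the off-path beliefs that would otherwise sustain spurious equilibria: because $\tilde{c}$ and $d_{g}$ both satisfy the single-crossing (signaling) inequalities in $\theta$, the set of types willing to make any given off-path deviation is an interval that shrinks monotonically, so D1 forces observers to attribute a deviation to the most eager type. This destroys pooling and every non-monotone type assignment, leaving only the cutoff structure above; the strict monotonicity of $\Psi_g$ and $\Xi$ then makes the surviving cutoffs $(\theta_l^*,\hat{\theta}^*,\theta_h^*)$ unique, so the outcome is unique.

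The part I expect to be most delicate is stability. I would introduce the natural myopic adjustment dynamics in which each cutoff moves against the net return to its marginal type, $\dot{\theta}_g\propto-\Psi_g$ and $\dot{\hat{\theta}}\propto-\Xi$, and linearize at the fixed point. The subtlety is that $\tilde{V}$ depends on $\hat{\theta}$ both directly and through $\theta_h^*(\hat{\theta}),\theta_l^*(\hat{\theta})$, so the $3\times3$ Jacobian is not diagonal. The envelope theorem annihilates the indirect effect of $\theta_g^*$ on the value $V$ (the action is chosen optimally), which decouples the two action rows from the group row up to the belief channel; the three stated inequalities $d_{g}'(\theta_g^*)-\mu_I\phi_g'(\theta_g^*)<0$ and $\tilde{c}'(\hat{\theta}^*)-\tilde{V}'(\hat{\theta}^*)-\mu_O\phi'(\hat{\theta}^*)<0$ are then exactly the statements that the diagonal entries $\Psi_g'$ and $\Xi'$ are negative. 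The real work is showing that these diagonal sign conditions dominate the residual off-diagonal belief terms so that the Jacobian is Hurwitz; I would handle this by exhibiting the adjustment map as a local contraction in the sup-norm, exploiting that the coupling runs only one way (the group cutoff feeds the action cutoffs, not conversely) at the crossing point.
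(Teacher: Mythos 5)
Your overall architecture (nest the action-cutoff condition $(\ref{eq:d})$ inside the group-cutoff condition $(\ref{eq:f})$, IVT for existence, D1 against pooling, linearized dynamics for stability) is sensible, and for what it is worth the paper never actually supplies a proof of this proposition: Appendix B is a placeholder, and the only hard evidence of the intended argument is the three-equation system $(\sigma^h,\sigma^l,\sigma^s)$ and its Jacobian in Appendices A and C. Measured against that structure, your stability step rests on a false premise. You claim the coupling ``runs only one way (the group cutoff feeds the action cutoffs, not conversely)'' and that the envelope theorem annihilates the dependence of the group condition on $\theta_h^*,\theta_l^*$. It does not: the marginal type $\hat{\theta}$ compares the stigma of pooling with the segment $[\hat{\theta},\theta_h^*]$ (the $a=0$ population of $h$) against the honor of pooling with $[\theta_l^*,\hat{\theta}]$ (the $a=1$ population of $l$), and those conditional means depend on $\theta_h$ and $\theta_l$ directly through beliefs. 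This is a first-order payoff effect that the envelope theorem --- which only removes the term coming from the agent's own action switching --- cannot eliminate; it is exactly why the paper's Jacobian carries the nonzero entries $\sigma^s_h=\mu_I\,\partial\tilde{\phi}/\partial\theta_h>0$ and $\sigma^s_l=\mu_I\,\partial\tilde{\phi}/\partial\theta_l<0$, and indeed this two-way feedback is the substantive content of the inside/outside tradeoff and the engine of Propositions 2--4. With genuinely two-way coupling your sup-norm contraction argument collapses, and the three diagonal sign conditions stated in the proposition do not by themselves make the $3\times 3$ Jacobian Hurwitz; you would have to control the off-diagonal belief terms explicitly (via the bounds in Appendix A's Lemmas 2 and 3) or verify the Routh--Hurwitz inequalities for the full matrix.

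There are also two gaps earlier in the argument. For existence, the sign change of $\Xi$ at the boundaries, and the corner conditions $\Psi_h(\hat{\theta}^*)>0>\Psi_h(1)$, $\Psi_l(0)>0>\Psi_l(\hat{\theta}^*)$ that make all four segments nonempty, are exactly where parameter restrictions must be imposed; asserting that ``the signaling structure of $c$'' delivers them is not a proof, since for admissible primitives these inequalities can simply fail (the proposition's hedged wording reflects this). Moreover, your claim that Lemma 1 makes $\Psi_g$ strictly decreasing is valid only when the density is non-increasing, so that $\phi_g$ is non-decreasing; for a non-decreasing density Lemma 1 gives $\phi_g$ non-increasing and the sign of $d_g'-\mu_I\phi_g'$ is ambiguous --- which is precisely why the proposition must \emph{assume} $d_g'(\theta_g^*)-\mu_I\phi_g'(\theta_g^*)<0$ rather than derive it. Finally, for uniqueness you invoke strict monotonicity of $\Xi$ globally, but the stability hypothesis $\tilde{c}'-\tilde{V}'-\mu_O\phi'<0$ controls the slope only at the equilibrium point; without a global single-crossing argument (or additional assumptions on $f$, $c$, $d$) the root of $\Xi$ need not be unique, so uniqueness of the D1-surviving outcome does not yet follow from your sketch.
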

\begin{proof}
See Appendix A.
\end{proof}
Proposition $1$ says that there exists an equilibrium such that there are agents with positive measure in all segments $(g, a)$. 
Before the next analysis, let us highlight each status concern. We can see inside status provides incentive to take $a=1$ because agents care the relative rank in their group. Now let's consider the agents choosing $g=h$ and $a=0$. They incur the psychological cost (stigma) from inside status since they get lazy ($a=0$) in group $h$, which signals their ability are relatively low in group $h$. The reason why they choose group $h$ in spite of the stigma is that they can be perceived to be talented by outside spectators, with putting economic payoff aside. 
The desire for outside status may make people pursue higher group but get lazy in a group.

In the rest of this paper, we focus on the equilibrium outcome in proposition $1$, and use comparative statics to analyze the tradeoff of two status concerns.

\subsection{Comparative Statics}
\if0
 First we consider the increase of sensitivities for inside and outside status. 
\begin{prop}$($Comparative Statics $1$$)$\\
Consider the equilibrium outcome in proposition 1. Then the following holds:
\begin{itemize}
\item[$(i)$]If $\mu_{O}$ increases, then $\hat{\theta}$ decreases, $\theta_h$ decreases, and $\theta_l$ increases.
\item[$(ii)$]If $\mu_{I}$ increases, then $\hat{\theta}$ decreases, $\theta_h$ decreases, and $\theta_l$ increases.
\end{itemize}
\end{prop}
\begin{proof}
See Appendix B.
\end{proof}

The following figure describe the result of proposition $2$.
\begin{figure}[h]
\begin{center}
  \includegraphics[width=10cm,clip]{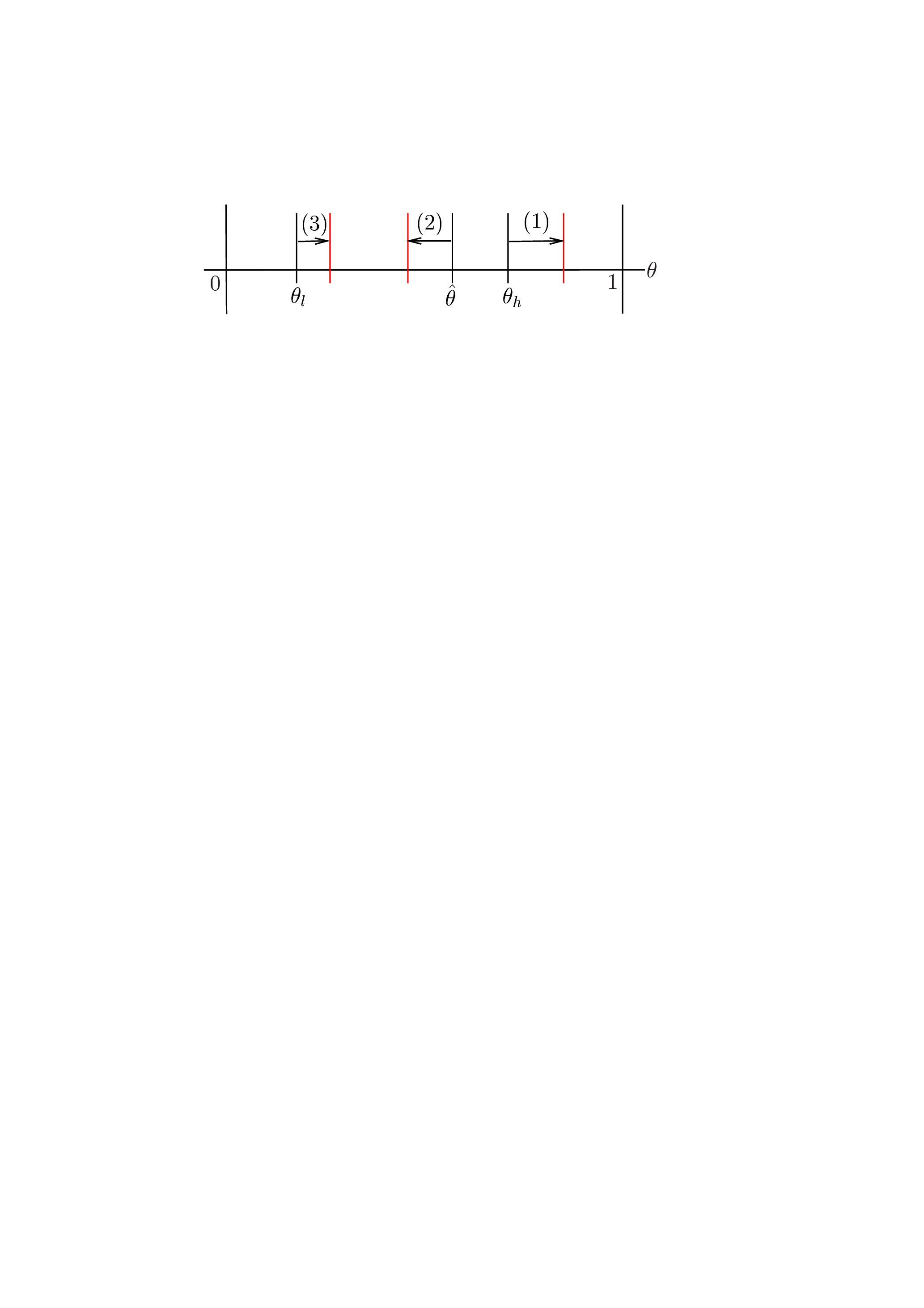}
\end{center}
\end{figure}
\fi
 First we consider the change of $d_{h}(\theta)$. 
\begin{prop}$($Comparative Statics $1$$)$\\
Consider the equilibrium in proposition $1$. If $d_{h}(\theta)$ increases equally for all $\theta$, then
\begin{itemize}
\item[$(i)$]$\hat{\theta}$ decreases,
\item[$(ii)$]$\theta_h$ increases, and
\item[$(iii)$]$\theta_l$ increases.
\end{itemize}
\end{prop}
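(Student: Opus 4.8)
The plan is to treat the three equilibrium conditions --- the two action-cutoff conditions $(\ref{eq:d})$ for $g=h$ and $g=l$ together with the group-cutoff condition $(\ref{eq:f})$ --- as a simultaneous system $F_1=F_2=F_3=0$ in the unknowns $(\theta_h,\theta_l,\hat{\theta})$, model the shock as replacing $d_h(\cdot)$ by $d_h(\cdot)+\epsilon$ (a uniform rise in the effort cost in the high group, leaving the lazy-type values untouched), and read off the signs of $d\theta_h/d\epsilon$, $d\theta_l/d\epsilon$, $d\hat{\theta}/d\epsilon$ by implicit differentiation at $\epsilon=0$. First I would record the direct effect: $\epsilon$ enters only $F_1$, with $\partial F_1/\partial\epsilon=1$. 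Crucially it does \emph{not} enter $F_3$ directly, because the indifferent group type satisfies $\hat{\theta}<\theta_h$, so at the margin she is lazy ($a=0$) in the high group and her high-group value uses $d(0,h,\cdot)$, which the shock leaves fixed; hence $\tilde{V}$, and therefore $F_3$, carries no direct $\epsilon$-dependence, and the forcing vector is $(-1,0,0)$.

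Second I would sign the Jacobian $J=(a_{ij})$. The two action equations decouple except through $\hat{\theta}$, so $\partial F_1/\partial\theta_l=\partial F_2/\partial\theta_h=0$. The diagonal entries $a_{11}=d_h'-\mu_I\partial_{\theta_h}\phi_h$ and $a_{22}=d_l'-\mu_I\partial_{\theta_l}\phi_l$ are negative by the action-stability inequalities of Proposition $1$. The off-diagonal signs follow from monotonicity of truncated conditional means (the signaling property, Lemma $1$): raising $\hat{\theta}$ shrinks the lazy pool $[\hat{\theta},\theta_h)$ in $h$ from below and enlarges the diligent pool $[\theta_l,\hat{\theta}]$ in $l$ from above, giving $\partial F_1/\partial\hat{\theta}>0$ and $\partial F_2/\partial\hat{\theta}<0$; likewise $\tilde{V}$ rises in $\theta_h$ (a larger lazy pool raises $\mathbb{E}[\theta\mid a=0,h]$) and falls in $\theta_l$, so $a_{31}=\partial F_3/\partial\theta_h>0$ and $a_{32}=\partial F_3/\partial\theta_l<0$. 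The group-stability inequality controls the own-term in $F_3$, and with this sign pattern one obtains $\det J>0$, so Cramer's rule delivers the three comparative statics.

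The clean conclusions are (i) and (iii). Substituting the action responses $\partial\theta_h/\partial\hat{\theta}>0$ and $\partial\theta_l/\partial\hat{\theta}<0$ into the reduced group condition $G(\hat{\theta},\epsilon)=0$, the shock acts only through $\theta_h$, so $d\hat{\theta}/d\epsilon=-G_\epsilon/G_{\hat{\theta}}<0$, since $G_\epsilon=(\partial\tilde{V}/\partial\theta_h)\,(\partial\theta_h/\partial\epsilon)>0$ and $G_{\hat{\theta}}>0$ by stability; this gives (i). Then $d\theta_l/d\epsilon=(\partial\theta_l/\partial\hat{\theta})(d\hat{\theta}/d\epsilon)=(\text{neg})(\text{neg})>0$, which is (iii).

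The hard part is (ii). The total effect $d\theta_h/d\epsilon=\partial\theta_h/\partial\epsilon+(\partial\theta_h/\partial\hat{\theta})(d\hat{\theta}/d\epsilon)$ pits two channels against each other: the direct channel raises $\theta_h$ (effort is costlier, so fewer exert it), while the indirect channel lowers it, because $\hat{\theta}$ falls, lower types enter $h$, and through $\phi_h$ this pushes the diligence cutoff down. I would resolve this by showing the direct channel dominates: reorganizing via Cramer's rule, $\operatorname{sign}(d\theta_h/d\epsilon)=\operatorname{sign}\!\big(-(a_{22}a_{33}-a_{23}a_{32})\big)$, the negative of the $2\times 2$ cofactor of the $(\theta_l,\hat{\theta})$ block, and I would argue this cofactor is negative using the group-stability inequality, which prevents the $\hat{\theta}$-feedback loop from overturning the own-effect. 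Pinning down this dominance --- i.e. verifying that the stability condition of Proposition $1$ is exactly what signs this cofactor --- is the step I expect to require the most care.
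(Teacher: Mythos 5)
Your proposal is correct and follows essentially the same route as the paper's Appendix C proof: the same three-equation system with forcing vector $(1,0,0)$ (the paper's $(\sigma^h_\alpha,\sigma^l_\alpha,\sigma^s_\alpha)$), the same Jacobian sign pattern obtained from the stability conditions of Proposition 1 plus the monotonicity lemmas on truncated conditional means, and the same Cramer's-rule conclusion. The step you flag as hardest, part (ii), is exactly the paper's computation $\partial\theta_h/\partial\alpha=-(\sigma_l^l \sigma_s^s-\sigma_l^s \sigma_s^l)/|J_\sigma|>0$, and it follows immediately from the sign pattern you already established, since $a_{22}a_{33}<0$ and $a_{23}a_{32}>0$ give $a_{22}a_{33}-a_{23}a_{32}<0$ without further work.
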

\begin{proof}
See Appendix C.
\end{proof}

The following figure describe the result of proposition $2$.
\begin{figure}[h]
\begin{center}
  \includegraphics[width=10cm,clip]{13.eps}
\end{center}
\end{figure}

Since the increase for $d_h(\theta)$ means the cost of $a=1$ in $h$ group relatively goes up, more people in $h$ group get to choose $a=0$ $((1))$. Then the perceived type choosing $a=0$ in $h$ group becomes higher than ever because more people who has higher type get to choose $a=0$. This means the stigma to get lazy in $h$ decreases, so that more people in $l$ group move to $h$ group $((2))$ to attain higher outside status in $h$ group, giving up gaining high inside status in $l$ group. This means the honor to make an effort in $l$ decreases, so that more people in $l$ get lazy ((3)). 

It's interesting to consider the inverse direction. If $d_{h}(\theta)$ decreases equally for all $\theta$, then, by proposition 2, $(i)$ $\hat{\theta}$ increases, $(ii)$ $\theta_h$ decreases, and $(iii)$ $\theta_l$ decreases. Then we can increase total efforts ``without'' losing any types' effort as Figure \ref{policy_1} shows.

\begin{figure}[h]
\begin{center}
  \includegraphics[width=13cm,clip]{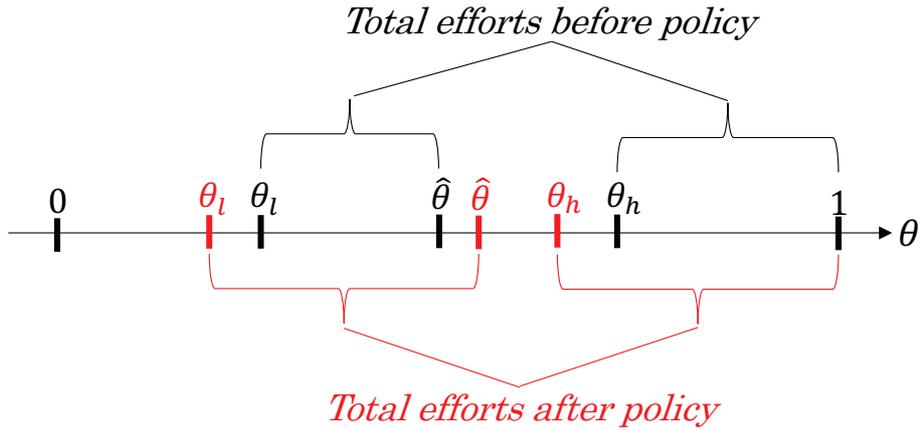}
\caption{Increase of $d_{h}(\theta)$ and total efforts}\label{policy_1}
\end{center}
\end{figure}

The decreases of action cost in $h$ group incentivize more agents in $h$, which makes some around-bottom agents more miserable, so that some lazy agents move to $l$ group. That has a positive effect on $l$ group since the honor to make an effort in $l$ group increases, which incentivize more agents in $l$.

 This provide us with an important intuition such that It's to incentivize higher group that has a positive spillover to lower group with respect to total efforts. This can be realized by increasing $\mu_{I}$ in $h$ group.\footnote{This is not explicitly analyzed in our model.} One potential way is to increase observability in $h$ group. 

 Next we consider the marginal change of $d_{l}(\theta)$. 

\begin{prop}$($Comparative Statics $2$$)$\\
Consider the equilibrium in proposition 1. If $d_{l}(\theta)$ increases equally for all $\theta$, then
\begin{itemize}
\item[$(i)$]$\hat{\theta}$ increases,
\item[$(ii)$]$\theta_h$ increases, and
\item[$(iii)$]$\theta_l$ increases.
\end{itemize}
\end{prop}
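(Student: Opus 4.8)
The plan is to carry out comparative statics on the three equilibrium conditions that characterize the cutoff equilibrium of Proposition $1$, exactly as in the proof of Proposition $2$, but with the perturbation now entering the cost of effort in the low group. Recall that the action cutoffs solve \eqref{eq:d} for each $g\in\{h,l\}$ and the group cutoff solves \eqref{eq:f}. Because \eqref{eq:d} for $g=h$ does not involve $d_l$ while \eqref{eq:d} for $g=l$ does, and because \eqref{eq:f} depends on the action cutoffs through $\tilde{V}$, these equations have a recursive structure: given $\hat{\theta}$ the two action cutoffs are pinned down as $\theta_h(\hat{\theta})$ and $\theta_l(\hat{\theta})$, and $\hat{\theta}$ is then pinned down by \eqref{eq:f}. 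I would therefore differentiate totally with respect to a uniform upward shift $t$ of $d_l$ and sign each derivative using the stability inequalities of Proposition $1$ together with Lemma $1$.

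First I would treat the action subgames. Differentiating \eqref{eq:d} for $g=l$ in $t$ gives $\partial\theta_l/\partial t=-1/\bigl(d_l'(\theta_l)-\mu_I\phi_l'(\theta_l)\bigr)>0$, since the denominator is negative by the stability condition $d_l'(\theta_l^*)-\mu_I\phi_l'(\theta_l^*)<0$; this is the direct force behind $(iii)$. Differentiating \eqref{eq:d} in $\hat{\theta}$, and using that raising the group boundary $\hat{\theta}$ shrinks group $h$ from below and enlarges group $l$ from above, I would establish the boundary-monotonicity signs $\partial_{\hat{\theta}}\phi_h<0$ and $\partial_{\hat{\theta}}\phi_l>0$ (the analogue, for the moving endpoint, of the gap studied in Lemma $1$). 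Combined with the stability denominators this yields $d\theta_h/d\hat{\theta}>0$ and $\partial\theta_l/\partial\hat{\theta}<0$.

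Next I would substitute $\theta_h(\hat{\theta})$ and $\theta_l(\hat{\theta})$ into \eqref{eq:f} and differentiate in $t$. The shift raises $\theta_l$, which raises the inside-status (honor) term $\mathbb{E}[\theta|a=1,l]-\overline{\theta}_l$ enjoyed by the marginal type, who makes an effort in $l$; this raises the value of joining $l$, lowers $\tilde{V}=V_h-V_l$, and hence---using the group-cutoff stability condition $\tilde{c}'(\hat{\theta})-\tilde{V}'(\hat{\theta})-\mu_O\phi'(\hat{\theta})<0$---forces $\hat{\theta}$ up, giving $(i)$. Since $\theta_h$ depends on $t$ only through $\hat{\theta}$ and $d\theta_h/d\hat{\theta}>0$, $(i)$ immediately delivers $d\theta_h/dt>0$, i.e. $(ii)$. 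Finally I would assemble the total derivative $d\theta_l/dt=\partial\theta_l/\partial t+(\partial\theta_l/\partial\hat{\theta})(d\hat{\theta}/dt)$ to confirm $(iii)$.

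The hard part will be signing the two places where a direct force and a feedback force oppose each other. For $\hat{\theta}$ the direct cost-of-effort effect on the marginal type's value in $l$ pushes $\hat{\theta}$ down, while the signaling (inside-status) effect of a higher $\theta_l$ pushes it up; as in the narrative following Proposition $2$, the argument must show the signaling channel governs the sign, and this is precisely where the stability condition and the monotonicity of Lemma $1$ do the work. For $\theta_l$ the direct term $\partial\theta_l/\partial t>0$ and the feedback term $(\partial\theta_l/\partial\hat{\theta})(d\hat{\theta}/dt)<0$ have opposite signs, so $(iii)$ requires bounding the feedback by the direct effect; I expect this to follow from the same stability inequalities, which limit the magnitude of $d\hat{\theta}/dt$. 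Throughout I would also check that the perturbation keeps the cutoffs interior and ordered as $\theta_l<\hat{\theta}<\theta_h$, so that the characterization of Proposition $1$ continues to apply.
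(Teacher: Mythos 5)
Your machinery is the same as the paper's: implicit-function-theorem comparative statics on the three cutoff conditions, with signs supplied by the stability inequalities and by the monotonicity of $\phi_h,\phi_l$ in $\hat{\theta}$ (your ``boundary-monotonicity'' step is exactly the paper's Lemma 2 in Appendix A), and your block-recursive elimination---solve $\theta_h(\hat{\theta})$ and $\theta_l(\hat{\theta},t)$ from the action equations, then substitute into the group equation---is algebraically equivalent to the paper's $3\times 3$ Cramer's-rule computation on $(\sigma^h,\sigma^l,\sigma^s)$. The genuine gap lies in the two steps you yourself flag as ``the hard part,'' which you never carry out, and the first of which cannot be carried out with the tools you cite. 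Writing $\sigma^x_y=\partial \sigma^x/\partial\theta_y$ with $\theta_s\equiv\hat{\theta}$ as in Appendix C, the paper's proof works because its perturbation vector is $(\sigma^h_\beta,\sigma^l_\beta,\sigma^s_\beta)=(0,1,0)$: the shift in $d_l$ enters \emph{only} the $l$-action equation, so there is no ``direct cost-of-effort effect on the marginal type's value in $l$'' at all, and the only channel to $\hat{\theta}$ is the honor effect of a higher $\theta_l$, whose sign is immediate from $\sigma^s_l<0$, $\sigma^l_l<0$. You instead let the cutoff type (who plays $a=1$ in $l$) pay the extra cost, i.e.\ you perturb the group-indifference condition directly, $\sigma^s_\beta=+1$; then the total derivative of the substituted group condition in $t$ is $1-\sigma^s_l/\sigma^l_l$, a difference of two positive quantities whose relative size is restricted by \emph{nothing} in Proposition 1, Lemma 1, or the stability conditions: as $\mu_I\to 0$, $\sigma^s_l=\mu_I\,\partial\tilde{\phi}/\partial\theta_l\to 0$ while $\sigma^l_l\to d_l'<0$ stays bounded away from zero, so the direct effect dominates and $\hat{\theta}$ strictly \emph{decreases}. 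Hence under your specification claim $(i)$---and with it your derivation of $(ii)$---is not provable and is in fact false in general; to complete the proof you must first commit to the paper's specification, under which the increase in $d_l=d(1,l,\cdot)-d(0,l,\cdot)$ leaves the group-indifference condition untouched (this is tied to the paper's definition $\tilde{d}(\hat{\theta})=d(1,h,\hat{\theta})-d(0,l,\hat{\theta})$, in which the $a=1$ cost in $l$ simply does not appear).

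Your plan for $(iii)$ is also off in a smaller but instructive way: no ``bound on the magnitude of $d\hat{\theta}/dt$'' follows from stability, and none is needed, because the opposition you worry about resolves by an exact cancellation. Taking the paper's $\sigma^s_\beta=0$, one has $\partial\theta_l/\partial t=-1/\sigma^l_l$, $\partial\theta_l/\partial\hat{\theta}=-\sigma^l_s/\sigma^l_l$, and $d\hat{\theta}/dt=(\sigma^s_l/\sigma^l_l)/\Sigma$, where $\Sigma>0$ is the derivative in $\hat{\theta}$ of the group condition after substituting $\theta_h(\hat{\theta}),\theta_l(\hat{\theta},t)$ (indeed $\Sigma=|J_\sigma|/(\sigma^h_h\sigma^l_l)>0$). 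Assembling $d\theta_l/dt=\partial\theta_l/\partial t+(\partial\theta_l/\partial\hat{\theta})(d\hat{\theta}/dt)$, the terms involving $\sigma^s_l$ cancel identically, leaving $d\theta_l/dt=-(1/\sigma^l_l)\bigl(\sigma^s_s-\sigma^s_h\sigma^h_s/\sigma^h_h\bigr)/\Sigma$, every factor of which is signed by stability and the paper's Lemmas 2 and 3. This is exactly what the paper's determinant formula $\partial\theta_l/\partial\beta=-(\sigma^h_h\sigma^s_s-\sigma^h_s\sigma^s_h)/|J_\sigma|>0$ delivers mechanically: once the perturbation vector is $(0,1,0)$ and the nine partials are signed, all three conclusions follow from Cramer's rule with no dominance argument anywhere.
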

\begin{proof}
See Appendix C.
\end{proof}

The following figure describe the result of proposition $3$.

\begin{figure}[h]
\begin{center}
\includegraphics[width=10cm,clip]{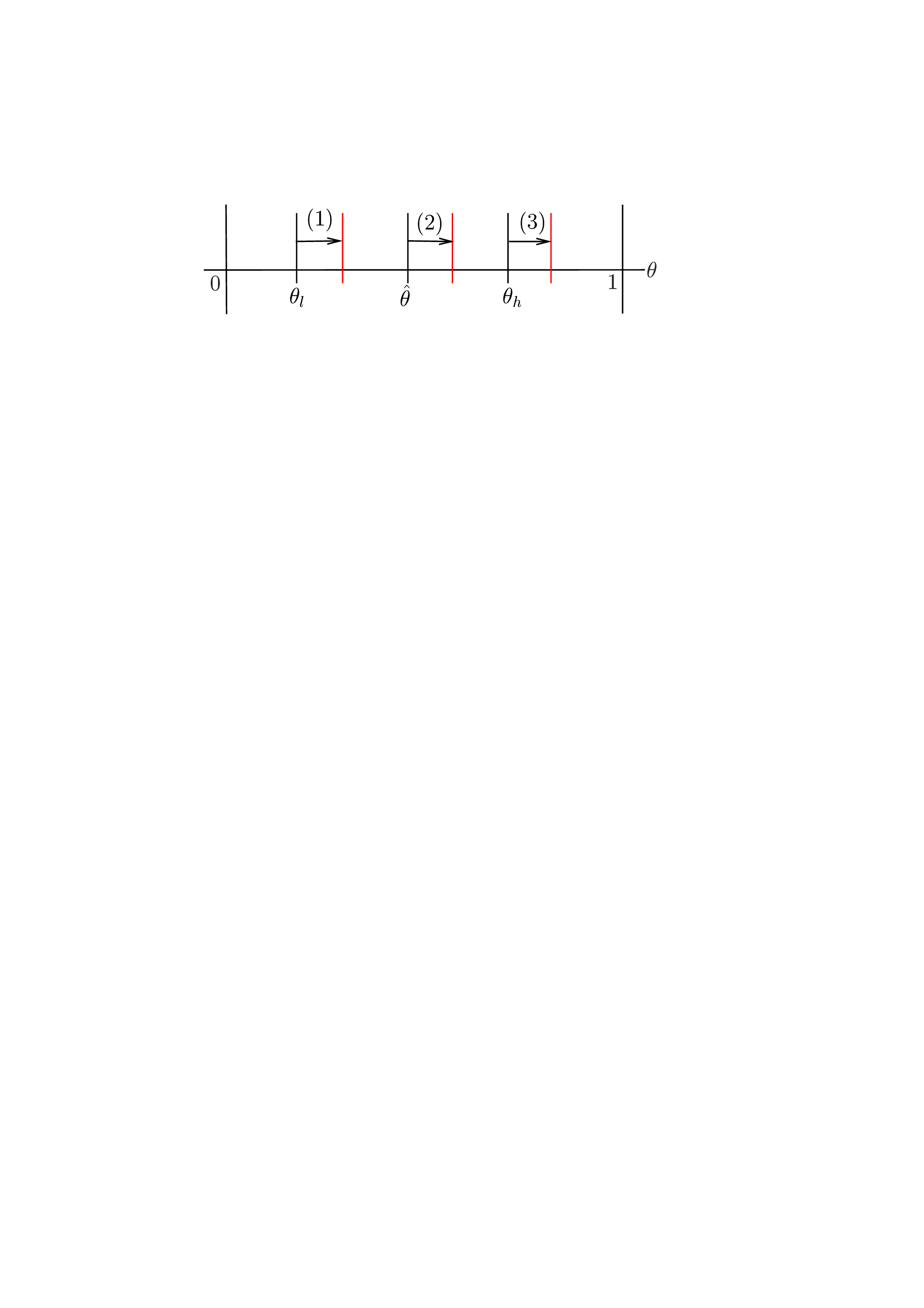}
\end{center}
\end{figure}

When the cost making an effort in $l$ group relatively goes up, more agents in $l$ group become lazy $((1))$. Then the perceived type choosing $a=1$ in $l$ becomes higher than ever because more agents having lower ability get to choose $a=0$. This means the honor to make an effort in $l$ increases, so that more people in $h$ group move to $l$ group $((2))$ to attain higher inside status in $l$ group, giving up gaining high outside status in $h$ group. On the other hand, in $h$ group, since some around-bottom agents move to $l$ group, the perceived ability given $a=0$ in $h$ becomes higher. This means the stigma not to make an effort in $h$ decreases, so that more people in $h$ get lazy ((3)). In this case, we cannot increase total efforts without losing any types' effort as proposition $2$.

 Finally we consider the marginal change of $\tilde{c}(\theta)$. 

\begin{prop}$($Comparative Statics $3$$)$\\
Consider the equilibrium in proposition 1. If $\tilde{c}(\theta)$ increases equally for all $\theta$, then
\begin{itemize}
\item[$(i)$]$\hat{\theta}$ increases,
\item[$(ii)$]$\theta_h$ increases, and
\item[$(iii)$]$\theta_l$ decreases.
\end{itemize}
\end{prop}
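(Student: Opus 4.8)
The plan is to treat the three equilibrium conditions---$(\ref{eq:d})$ for $g=h$, $(\ref{eq:d})$ for $g=l$, and the group condition $(\ref{eq:f})$---as an implicit system in the unknowns $(\hat{\theta},\theta_h,\theta_l)$ and to differentiate totally with respect to a uniform shift parameter $\epsilon$ entering as $\tilde{c}(\theta)\mapsto\tilde{c}(\theta)+\epsilon$. The structural observation that makes this tractable is that the system is \emph{recursive}: the two action--cutoff equations $(\ref{eq:d})$ contain neither $\tilde{c}$ nor each other; each involves only its own action cutoff and the group cutoff $\hat{\theta}$ through $\phi_g$. Hence I would first solve $\theta_h=\theta_h(\hat{\theta})$ and $\theta_l=\theta_l(\hat{\theta})$, substitute these into $(\ref{eq:f})$ to obtain a single reduced equation $\hat{G}(\hat{\theta},\epsilon)=0$, and then sign the three derivatives in turn.

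First I would establish $(i)$. Since $\epsilon$ enters $(\ref{eq:f})$ only through $-\tilde{c}$, one has $\partial\hat{G}/\partial\epsilon=-1$, while $\partial\hat{G}/\partial\hat{\theta}$ is the \emph{total} derivative $\mu_{O}\phi'(\hat{\theta}^*)-\tilde{c}'(\hat{\theta}^*)+\tilde{V}'(\hat{\theta}^*)$, with the feedback through $\theta_h(\cdot),\theta_l(\cdot)$ already folded into $\tilde{V}'$. The stability hypothesis of Proposition~1, $\tilde{c}'(\hat{\theta}^*)-\tilde{V}'(\hat{\theta}^*)-\mu_{O}\phi'(\hat{\theta}^*)<0$, says precisely that $\partial\hat{G}/\partial\hat{\theta}>0$, so the implicit function theorem gives $d\hat{\theta}/d\epsilon=-(\partial\hat{G}/\partial\epsilon)/(\partial\hat{G}/\partial\hat{\theta})=1/(\partial\hat{G}/\partial\hat{\theta})>0$, which proves $(i)$: raising the relative cost of the high group shrinks it.

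For $(ii)$ and $(iii)$ I would propagate this sign through the action--cutoff equations by the chain rule $d\theta_g/d\epsilon=(d\theta_g/d\hat{\theta})\,(d\hat{\theta}/d\epsilon)$. Implicit differentiation of $(\ref{eq:d})$ yields $d\theta_g/d\hat{\theta}=\mu_{I}(\partial\phi_g/\partial\hat{\theta})/(d_{g}'(\theta_g^*)-\mu_{I}\phi_g'(\theta_g^*))$, whose denominator is negative by the action--cutoff stability condition $d_{g}'(\theta_g^*)-\mu_{I}\phi_g'(\theta_g^*)<0$; thus the sign reduces to that of $\partial\phi_g/\partial\hat{\theta}$. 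The key is how the conditional means move as the group boundary $\hat{\theta}$ shifts: in group $h$, raising $\hat{\theta}$ raises its lower endpoint $\theta_h^{-}=\hat{\theta}$, which raises $\mathbb{E}[\theta|\theta_h>\theta\geq\hat{\theta}]$ while leaving $\mathbb{E}[\theta|1\geq\theta\geq\theta_h]$ fixed, so $\partial\phi_h/\partial\hat{\theta}<0$ and $d\theta_h/d\hat{\theta}>0$; in group $l$, raising $\hat{\theta}$ raises its upper endpoint $\theta_l^{+}=\hat{\theta}$, which raises $\mathbb{E}[\theta|\hat{\theta}\geq\theta\geq\theta_l]$ while leaving $\mathbb{E}[\theta|\theta_l>\theta\geq0]$ fixed, so $\partial\phi_l/\partial\hat{\theta}>0$ and $d\theta_l/d\hat{\theta}<0$. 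Combining these with $d\hat{\theta}/d\epsilon>0$ gives $d\theta_h/d\epsilon>0$ (part $(ii)$) and $d\theta_l/d\epsilon<0$ (part $(iii)$).

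The main obstacle I anticipate is the rigorous signing of $\partial\phi_g/\partial\hat{\theta}$, i.e. confirming that moving one endpoint of a group's support monotonically shifts the relevant conditional mean for a general monotone density $f$. This is in the spirit of Lemma~1 but concerns movement of an endpoint rather than of the interior cutoff, so I would close it by the direct computation $\frac{\partial}{\partial s}\mathbb{E}[\theta\mid s\geq\theta\geq t]=f(s)\,(s-\mathbb{E}[\theta\mid s\geq\theta\geq t])/\Pr[s\geq\theta\geq t]>0$ together with its mirror image for the lower endpoint. A secondary point to state carefully is that the $\tilde{V}'(\hat{\theta}^*)$ appearing in the stability condition is the \emph{total} derivative, already incorporating the induced changes in $\theta_h$ and $\theta_l$; once this is made explicit, no feedback terms are lost in the reduction and parts $(i)$--$(iii)$ follow mechanically from the two stability inequalities of Proposition~1.
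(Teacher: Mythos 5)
Your proof is correct---every sign agrees with the paper's Proposition 4---but it takes a genuinely different route in its organization. The paper never reduces the system: it keeps the three equilibrium conditions as a full $3\times 3$ system $\sigma^h=\sigma^l=\sigma^s=0$ in $(\hat{\theta},\theta_h,\theta_l)$, with the shock entering only as $(\sigma_\gamma^h,\sigma_\gamma^l,\sigma_\gamma^s)=(0,0,-1)$; it signs all nine partials using the two action-stability conditions, the group-stability condition $\sigma_s^s>0$, and the endpoint-monotonicity Lemmas 2 and 3 of Appendix A; it verifies $|J_\sigma|=\sigma_h^h(\sigma_l^l\sigma_s^s-\sigma_l^s\sigma_s^l)-\sigma_h^s\sigma_l^l\sigma_s^h>0$; and it reads off the three signs by Cramer's rule. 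Your block-recursive reduction is algebraically equivalent: since $d\theta_h/d\hat{\theta}=-\sigma_s^h/\sigma_h^h$ and $d\theta_l/d\hat{\theta}=-\sigma_s^l/\sigma_l^l$, your reduced derivative satisfies $\hat{G}'(\hat{\theta})=\sigma_s^s+\sigma_h^s\,(d\theta_h/d\hat{\theta})+\sigma_l^s\,(d\theta_l/d\hat{\theta})=|J_\sigma|/(\sigma_h^h\sigma_l^l)$, so your hypothesis $\hat{G}'>0$ (Proposition 1's group-stability condition with $\tilde{V}'$ read, as you correctly insist, as a total derivative) is equivalent to the paper's $|J_\sigma|>0$ given action stability, and your chain-rule expressions $d\hat{\theta}/d\epsilon=\sigma_h^h\sigma_l^l/|J_\sigma|$, $d\theta_h/d\epsilon=-\sigma_s^h\sigma_l^l/|J_\sigma|$, $d\theta_l/d\epsilon=-\sigma_s^l\sigma_h^h/|J_\sigma|$ coincide term by term with the paper's Cramer formulas. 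What your route buys: part $(i)$ falls out in one line directly from the stability condition as stated in Proposition 1, and the propagation logic (the shock hits only the group margin, which then moves each action margin through $\partial\phi_g/\partial\hat{\theta}$) is economically transparent; this works so cleanly precisely because $\gamma$ enters only the group equation. What the paper's route buys: a single uniform apparatus that also delivers Propositions 2 and 3, where the shock enters an action equation and your recursion, while still feasible, would have to track the shock's feedback through $\theta_h(\cdot)$ into the group equation rather than isolating one shocked equation. Finally, the ``main obstacle'' you flag is already resolved in the paper: your computation $\partial_s\mathbb{E}[\theta\mid s\geq\theta\geq t]=f(s)\bigl(s-\mathbb{E}[\theta\mid s\geq\theta\geq t]\bigr)/\bigl(F(s)-F(t)\bigr)>0$, and its mirror image, are exactly the content of Lemmas 2 and 3 in Appendix A, so nothing essential is missing from your argument.
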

\begin{proof}
See Appendix C.
\end{proof}

The following figure describe the result of proposition $4$.

\begin{figure}[h]
\begin{center}
\includegraphics[width=10cm,clip]{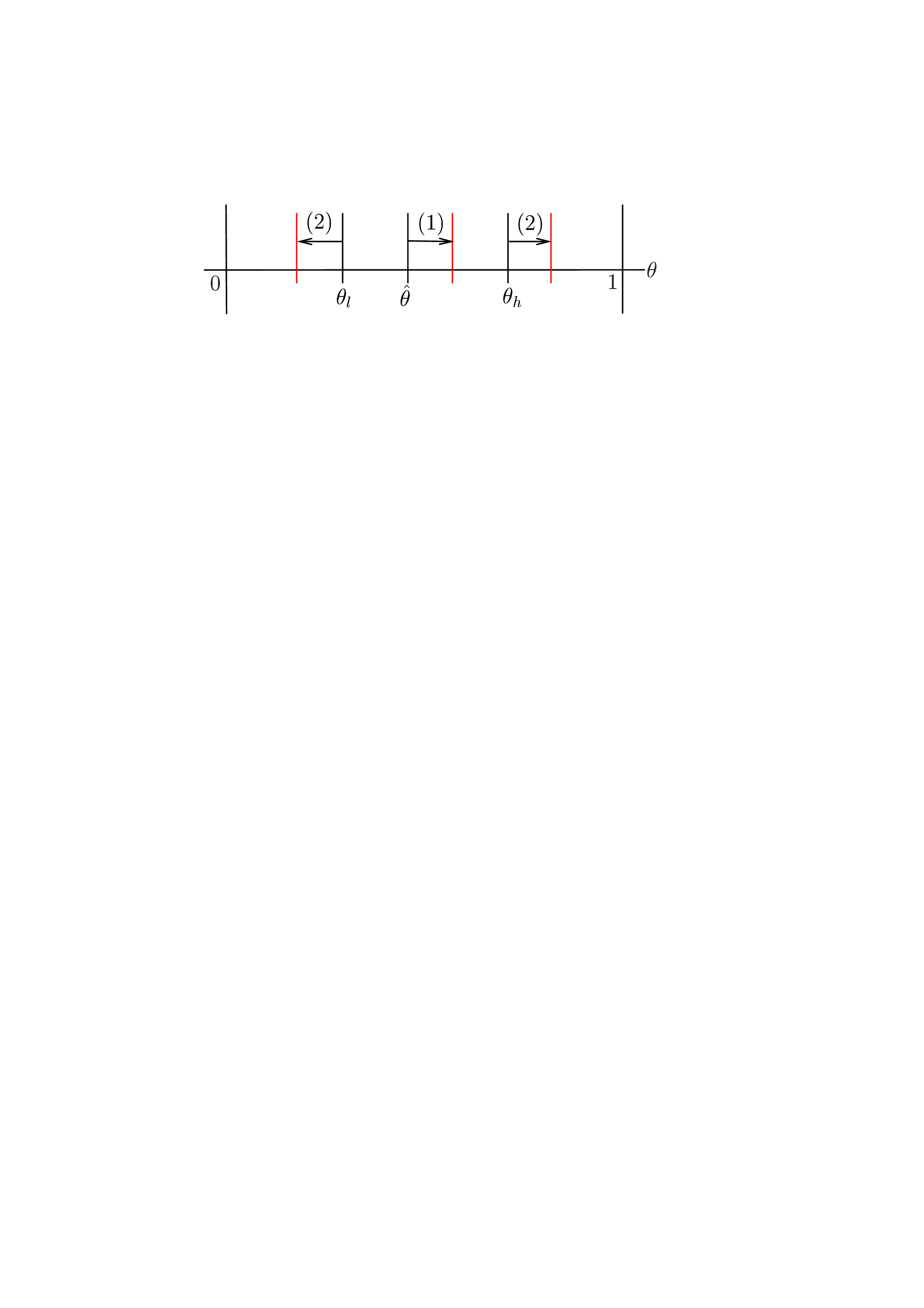}
\end{center}
\end{figure}

When the cost of entering group $h$ relatively goes up, more agents in $h$ get to move $l$ group $((1))$. Then there are two effects. First, the perceived type choosing $a=1$ in group $l$ becomes higher because more people having higher type get to choose $a=1$ than ever. This means the honor to make an effort in $l$ increases, so that more people get to choose to make an effort $((2))$. At the same time, the perceived type choosing $a=0$ in $h$ becomes higher because around-bottom types in $h$ move to $l$ group. This means the stigma to get lazy in $h$ decreases, so that more agents get lazy in $h$ $((3))$. Also in this case, we cannot increase total efforts without losing any types' effort.

\section{Applications}
\subsection{Job Market Signaling and Human Capital Accumulation}
It is said that many college students have studied less recently. For example, as Babcock and Marks (2011) show empirically there has been a decline in the amount of student's study effort as competition of the entrance exam has been getting severe. While many theoretical models, for example Bentley Marleon and Miguel Urquiola (2015) or Hamada (2016), try to explain this phenomenon. From the view point of the tradeoff between inside and outside status, we may explain as follows. Suppose as the competition of the entrance exam has been getting severe, students focus on ``entering college'', and become careless for inside status than ever.\footnote{Our model cannot explain this part.} In particular, if $\mu_{I}$ decreases in $h$ group, the stigma decreases of being lazy in $h$ college. Then, more students in $l$ college move to $h$ college to attain outside status, resulting in less total efforts as in the proposition $2$.

We can consider the policy that promotes more students to make efforts after entering college. That is ``to promote high level college students to make an effort''. Then the stigma to get lazy in $h$ college increases, so that some bottom students in $h$ college move to $l$ college. Then for $l$ college, since relatively high type students enter, the honor to make an effort increases. Then more students in $l$ college get to make an effort. The policy in high level college can be positive spillover to the lower college. Actually, it's natural to interrupt this results as the cross-years transition of social norm and students' behaviors rather than the instant changes.


\subsection{Status-seeking in criminal subculture}
Consider criminal subculture where people receive status benefit from revealing their toughness as Dur and Weele $(2012)$ assume. Now suppose there are two organizations High and Low, people choose which group they enter. After entering, they choose whether to commit crime. 
In this situation, to increase $d_g(\theta)$ corresponds to increase punishment of committing crime.

What is the implication of our model in this context? For example, If we increase punishment in $h$ group, then crime decreases in $h$ and stigma not to commit crime decreases. Then, it becomes easier for $l$ group member to enter the $h$ group, and this is how $h$ group expand far from shrinking. However, if the top of $l$ members move to $h$, then the honor to commit a crime in $l$ decreases. This make some agents quit to commit crime in $l$ group. Thus the increase in $d_h(\theta)$ decreases the whole crime rate. 

On the other hand, if we increases punishment in $l$ group, then the crime rate in $l$ decreases. However, this increases honor to commit a crime in $l$ and thus the bottom of $h$ members move to $l$. This makes stigma not to commit a crime decrease, and thus the crime in $h$ decreases. In this case, the crime rate in $h$ decreases and $h$ group shrink, but that in $l$ may increases.

\subsection{ Residence Choice and Conspicuous Consumption}
Consider the people who want to signal their own wealth or income. In literature of conspicuous consumption, each individual use visible goods as a signaling device such as rich car, watch or clothes like that. In this context, the resident must be one of useful devise. For example, recent empirical study shows that some people move from rural area to the urban by signaling their status (Janabel (1996), Sivanathan and Pettit (2010)). 

 Now suppose people choose a residential area $(g)$ and whether to consume a visible good $(a)$. People gain a psychological benefit from living rich area (outside status) or having expensive visible good (inside status). We can consider the following problem by using my model: When the planner's objective is to decrease consumptions of visible good in order to mitigate the harmful consumption race, what kind of policy is effective? The answer from my model is ``to increase the tax of visible good consumed in $h$ area''.

\section{Conclusion}
We have analyzed agents who have psychological benefits from both outside status (desire to be perceived in prestigious group from outside observers) and inside status (desire to be perceived talented from peers inside their group). Using signaling game, we develop two stage model to analyze the tradeoff between two social status concerns on people's group choice and action. We get useful comparative statics results, especially we find some spillover effects from a certain group to another, when a policy is taken in each group, through which many phenomena are explained and rich policy implications are driven. In our model, we consider only two groups, but we may extend our main result to more-than-three groups' setting. Even in these setting, it's predicted that to incentivize agents in the highest group has a positive spillover effects to all lower groups. 

On the other hands, our results depend on no capacity constraint. We should do the further analysis on how our results are preserved under constraint. Moreover, we cannot understand in this model why people care social status. We want to explore the origin of two status concerns, and derive the desire endogenously. This is a very big issue.


\section*{Appendix A: Preliminary}
First we prove two lemmas.
\begin{lemm}
For any differentiable full-support density $f$ it holds $\frac{\partial \phi_h(\hat{\theta})}{\partial \hat{\theta}}<0$ for all $\theta_h\in (\hat{\theta}, 1]$ and $\frac{\partial \phi_l(\hat{\theta})}{\partial \hat{\theta}}>0$ for all $\theta_l\in [0, \hat{\theta})$.
\end{lemm}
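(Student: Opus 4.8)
The plan is to treat both $\phi_h$ and $\phi_l$ as differences of truncated conditional means and to reduce the whole claim to a single monotonicity fact about such means. For $b>a$ define the truncated mean
\[
m(a,b)\equiv \mathbb{E}[\theta\mid b\geq \theta\geq a]=\frac{\int_a^b \theta f(\theta)\,d\theta}{\int_a^b f(\theta)\,d\theta},
\]
which is well defined with a strictly positive denominator precisely because $f$ has full support. In this notation $\phi_h=m(\theta_h,1)-m(\hat{\theta},\theta_h)$ and $\phi_l=m(\theta_l,\hat{\theta})-m(0,\theta_l)$, so the problem reduces to how $m$ responds to a change in one of its limits of integration. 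Note that here $\phi_g$ is differentiated with respect to $\hat{\theta}$ holding the action cutoff $\theta_g$ fixed.

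First I would show that $m$ is strictly increasing in each of its arguments. Writing $N(a,b)=\int_a^b\theta f\,d\theta$ and $D(a,b)=\int_a^b f\,d\theta$, the Leibniz and quotient rules give
\[
\frac{\partial m}{\partial a}=\frac{f(a)\bigl(N-aD\bigr)}{D^2}=\frac{f(a)\int_a^b(\theta-a)f(\theta)\,d\theta}{D^2},\qquad
\frac{\partial m}{\partial b}=\frac{f(b)\bigl(bD-N\bigr)}{D^2}=\frac{f(b)\int_a^b(b-\theta)f(\theta)\,d\theta}{D^2}.
\]
On the range of integration $\theta-a\geq 0$ and $b-\theta\geq 0$, with strict inequality on a set of positive $f$-measure whenever $b>a$; hence, by full support, both partials are strictly positive.

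It then remains only to read off which limit carries the dependence on $\hat{\theta}$ in each case. In $\phi_h$ the first term $m(\theta_h,1)$ does not involve $\hat{\theta}$, while the second enters through its \emph{lower} limit, so $\partial\phi_h/\partial\hat{\theta}=-(\partial m/\partial a)|_{(\hat{\theta},\theta_h)}<0$. In $\phi_l$ the second term $m(0,\theta_l)$ does not involve $\hat{\theta}$, while the first enters through its \emph{upper} limit, so $\partial\phi_l/\partial\hat{\theta}=(\partial m/\partial b)|_{(\theta_l,\hat{\theta})}>0$. The hypotheses $\theta_h\in(\hat{\theta},1]$ and $\theta_l\in[0,\hat{\theta})$ guarantee each truncation interval is nondegenerate, so the denominators are positive and the inequalities strict. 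The only point requiring care — and it is minor — will be the Leibniz differentiation with respect to the variable limit, together with the observation that full support is exactly what upgrades the weak signs $\theta-a\geq 0$ and $b-\theta\geq 0$ to a strict conclusion; the rest is routine quotient-rule bookkeeping.
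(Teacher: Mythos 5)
Your proof is correct and takes essentially the same route as the paper: both arguments apply the Leibniz/quotient rule to the truncated conditional mean whose limit of integration is $\hat{\theta}$, and both obtain the sign from an integral of the form $f(\hat{\theta})\int(\hat{\theta}-\theta)\,dF$ divided by a squared positive denominator. The only difference is organizational — you prove a single monotonicity fact about $m(a,b)$ in both limits and specialize it twice, whereas the paper carries out the two quotient-rule computations separately (your general fact also subsumes the paper's second appendix lemma, which differentiates the same truncated means with respect to $\theta_h$ and $\theta_l$).
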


\begin{proof}
For any group threshold $\hat{\theta}\in (0, 1)$, we define two density functions $f_h$ and $f_l$ which represent each group density as follows:
\begin{eqnarray} 
\nonumber
f_h(\theta)&\equiv& \frac{f(\theta)}{1-F(\hat{\theta})}\;\;\;\;\theta \in [\hat{\theta}, 1],  \\ \nonumber
f_l(\theta)&\equiv& \frac{f(\theta)}{F(\hat{\theta})}\;\;\;\;\theta \in [0, \hat{\theta}],
\end{eqnarray}
where $f$ is the whole density function of players. We can calculate $\phi_g(\hat{\theta})$ as follows:
\begin{eqnarray}
\phi_h(\hat{\theta})&=&\frac{\int_{\theta_h}^{1}\theta dF}{1-F(\theta_h)}-\frac{\int_{\hat{\theta}}^{\theta_h}\theta dF}{F(\theta_h)-F(\hat{\theta})}, \\ 
\phi_l(\hat{\theta})&=& \frac{\int_{\theta_l}^{\hat{\theta}}\theta dF}{F(\hat{\theta})-F(\theta_l)}-\frac{\int_{0}^{\theta_l}\theta dF}{F(\theta_l)}.
\end{eqnarray}
Taking each derivative with respect to $\hat{\theta}$, we can have
\begin{eqnarray} 
\nonumber
\frac{\partial \phi_h(\hat{\theta})}{\partial \hat{\theta}}&=& -\frac{-\hat{\theta}f(\hat{\theta})\{F(\theta_h)-F(\hat{\theta})\}+f(\hat{\theta})\int_{\hat{\theta}}^{\theta_h}\theta dF}{\{F(\theta_h)-F(\hat{\theta})\}^2} \\ \nonumber
&=& \frac{f(\hat{\theta})\{\int_{\hat{\theta}}^{\theta_h}\hat{\theta} dF-\int_{\hat{\theta}}^{\theta_h}\theta dF\}}{\{F(\theta_h)-F(\hat{\theta})\}^2} \\
&=& \frac{f(\hat{\theta})\int_{\hat{\theta}}^{\theta_h}(\hat{\theta}-\theta) dF}{\{F(\theta_h)-F(\hat{\theta})\}^2}<0,\;\forall \theta_h \in (\hat{\theta}, 1]
\end{eqnarray}

and 

\begin{eqnarray} 
\nonumber
\frac{\partial \phi_l(\hat{\theta})}{\partial \hat{\theta}}&=& \frac{\hat{\theta}f(\hat{\theta})\{F(\hat{\theta})-F(\theta_l)\}-f(\hat{\theta})\int_{\theta_l}^{\hat{\theta}}\theta dF}{\{F(\hat{\theta})-F(\theta_l)\}^2} \\ \nonumber
&=&\frac{f(\hat{\theta})\{\int_{\theta_l}^{\hat{\theta}}\hat{\theta} dF-\int_{\theta_l}^{\hat{\theta}}\theta dF\}}{\{F(\hat{\theta})-F(\theta_l)\}^2} \\ 
&=&\frac{f(\hat{\theta})\int_{\theta_l}^{\hat{\theta}}(\hat{\theta}-\theta) dF}{\{F(\hat{\theta})-F(\theta_l)\}^2}>0, \;\forall \theta_l\in [0, \hat{\theta}).
\end{eqnarray}
\end{proof}

\begin{lemm}
For any differentiable full-support density $f$ followings hold:
\begin{eqnarray}
\nonumber
(i)\;\; \frac{\partial}{\partial \theta_h} \frac{\int_{\hat{\theta}}^{\theta_h}\theta f(\theta)d\theta}{F(\theta_h)-F(\hat{\theta})}>0,\\ \nonumber
(ii)\;\; \frac{\partial}{\partial \theta_l} \frac{\int_{\theta_l}^{\hat{\theta}}\theta f(\theta)d\theta}{F(\hat{\theta})-F(\theta_l)}<0.
\end{eqnarray}
\end{lemm}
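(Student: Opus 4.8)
The plan is to treat each ratio as a truncated conditional mean and differentiate it directly by the quotient rule, thereby reducing each claim to the sign of a single definite integral. Writing a generic ratio as $N/D$ with $N$ the numerator integral and $D$ the truncation mass $F(\cdot)-F(\cdot)$, the derivative is $(N'D-ND')/D^{2}$; since $D^{2}>0$ and $f$ has full support, the sign of the derivative is exactly the sign of $N'D-ND'$. In both parts the moving cutoff appears at once as a limit of the numerator integral and inside $F$ in the denominator, so $N'$ and $D'$ carry a common factor $f$ evaluated at that cutoff, which I pull out; the bracket that remains is a difference of the form $(\text{endpoint})\cdot D-N$ or $N-(\text{endpoint})\cdot D$, which rewrites as a single integral of $f$ against a linear function of $\theta$.

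For part (i) the cutoff $\theta_h$ is the \emph{upper} limit of $\int_{\hat{\theta}}^{\theta_h}\theta f\,d\theta$ and enters $D=F(\theta_h)-F(\hat{\theta})$ with a plus sign, so $N'=\theta_h f(\theta_h)$ and $D'=f(\theta_h)$. Extracting $f(\theta_h)$, the bracket becomes $\theta_h D-N=\int_{\hat{\theta}}^{\theta_h}(\theta_h-\theta)f(\theta)\,d\theta$, which is strictly positive because $\theta\le\theta_h$ throughout $[\hat{\theta},\theta_h]$; this yields (i). For part (ii) the cutoff $\theta_l$ is instead the \emph{lower} limit of $\int_{\theta_l}^{\hat{\theta}}\theta f\,d\theta$ and enters $D=F(\hat{\theta})-F(\theta_l)$ with a minus sign, so now $N'=-\theta_l f(\theta_l)$ and $D'=-f(\theta_l)$. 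Carrying both minus signs through $N'D-ND'$ and extracting $f(\theta_l)$, the bracket reduces to $\theta_l D-N=\int_{\theta_l}^{\hat{\theta}}(\theta_l-\theta)f(\theta)\,d\theta$, which is nonpositive because $\theta\ge\theta_l$ throughout $[\theta_l,\hat{\theta}]$; this yields the inequality claimed in (ii).

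The main obstacle is the sign bookkeeping in part (ii). Because the cutoff $\theta_l$ is differentiated simultaneously as a lower limit (contributing a minus via the Leibniz rule) and inside $-F(\theta_l)$ (contributing a second minus), a single mishandled sign would reverse the direction of the resulting integral and hence the inequality. The safe route is to expand $N'D-ND'$ term by term before simplifying, verify that the common factor is $f(\theta_l)\ge 0$, and only then combine $N$ with $\theta_l D$ using $\int_{\theta_l}^{\hat{\theta}} f\,d\theta=F(\hat{\theta})-F(\theta_l)=D$ so that the linear integrand $(\theta_l-\theta)$ emerges cleanly; part (i), where both contributions are positive, needs no such care and follows immediately.
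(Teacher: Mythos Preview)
Your approach is exactly the paper's: apply the quotient rule, pull out the factor $f(\text{cutoff})$, and rewrite the remaining bracket as an integral of $(\text{endpoint}-\theta)f(\theta)$ whose sign is obvious. Part~(i) is fine and matches the paper line for line.

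Part~(ii), however, contains precisely the sign slip you warned yourself against. With $N'=-\theta_l f(\theta_l)$ and $D'=-f(\theta_l)$ one gets
\[
N'D-ND' \;=\; -\theta_l f(\theta_l)\,D \;-\; N\bigl(-f(\theta_l)\bigr)
\;=\; f(\theta_l)\bigl(N-\theta_l D\bigr),
\]
so the bracket is $N-\theta_l D=\int_{\theta_l}^{\hat{\theta}}(\theta-\theta_l)f(\theta)\,d\theta\ge 0$, not $\theta_l D-N$. The derivative of the truncated mean with respect to $\theta_l$ is therefore \emph{positive}, which is also the intuitive answer: raising the lower truncation point removes the smallest values and pushes the conditional mean up. Your ``nonpositive'' conclusion comes from dropping a minus sign when subtracting $ND'$.

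For what it is worth, the paper's own proof of (ii) commits the same error in disguise: between its penultimate and final displayed lines it silently replaces $F(\hat{\theta})-F(\theta_l)$ by $F(\theta_l)-F(\hat{\theta})$ in the denominator, which is what manufactures the negative sign. In fact the inequality in the \emph{statement} of (ii) is reversed; the correct claim is $\partial/\partial\theta_l\,\mathbb{E}[\theta\mid\theta_l\le\theta\le\hat{\theta}]>0$, and it is this positive sign (combined with the minus in front of the second term of $\tilde{\phi}$) that delivers the $\sigma_l^s<0$ used later in the comparative statics. So your method is right, but you should report the bracket as $N-\theta_l D>0$ and flag that the printed inequality in (ii) has the wrong direction.
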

\begin{proof}
(i) : By the fundamental theorem of calculus, we have
\begin{eqnarray}
\nonumber
\frac{\partial}{\partial \theta_h} \frac{\int_{\hat{\theta}}^{\theta_h}\theta f(\theta)d\theta}{F(\theta_h)-F(\hat{\theta})}&=&\left[-\frac{f(\theta_h)}{(F(\theta_h)-F(\hat{\theta}))^2}\int_{\hat{\theta}}^{\theta_h}\theta f(\theta)d\theta+\frac{\theta_h f(\theta_h)}{F(\theta_h)-F(\hat{\theta})} \right]\\ \nonumber
&=&\left[\frac{f(\theta_h)}{F(\theta_h)-F(\hat{\theta})}\left(\theta_h-\int_{\hat{\theta}}^{\theta_h}\frac{\theta f(\theta)}{F(\theta_h)-F(\hat{\theta})}d\theta\right)\right]\\ \nonumber
&=&\left[\frac{f(\theta_h)}{F(\theta_h)-F(\hat{\theta})}\left(\theta_h-\mathbb{E}[\theta|\hat{\theta}<\theta<\theta_h]\right)\right]>0.
\end{eqnarray}
(ii) : Similarly to (i),
\begin{eqnarray}
\nonumber
\frac{\partial}{\partial \theta_l} \frac{\int_{\theta_l}^{\hat{\theta}}\theta f(\theta)d\theta}{F(\hat{\theta})-F(\theta_l)}&=&\left[\frac{f(\theta_l)}{(F(\hat{\theta})-F(\theta_l))^2}\int_{\theta_l}^{\hat{\theta}}\theta f(\theta)d\theta-\frac{\theta_l f(\theta_l)}{F(\hat{\theta})-F(\theta_l)} \right]\\ \nonumber
&=&\left[\frac{f(\theta_l)}{F(\hat{\theta})-F(\theta_l)}\left(\int_{\theta_l}^{\hat{\theta}}\frac{\theta f(\theta)}{F(\hat{\theta})-F(\theta_l)}d\theta -\theta_l\right)\right]\\ \nonumber
&=&\left[\frac{f(\theta_l)}{F(\theta_l)-F(\hat{\theta})}\left(\mathbb{E}[\theta|\theta_l<\theta<\hat{\theta}] -\theta_l\right)\right]<0.
\end{eqnarray}
\end{proof}

 \section*{Appendix B: Proof of Proposition 1}
(Being revised and updated soon) 

 \section*{Appendix C: Proof of Proposition 2, 3, 4}
 
\begin{proof}[Proof of Proposition 2:\nopunct]
Fix a partially-separating equilibrium characterized by $(\hat{\theta}^*, \theta_h^*, \theta_l^*)$. Let the increase of cost for an effort in $h$ be $\alpha$; that is each type $\theta$ making an effort in $h$ incurs the cost $d_{h}(\theta)+\alpha$. Now define the following functions:
\begin{eqnarray}
\sigma^h(\hat{\theta}, \theta_h, \theta_l, \alpha)&\equiv&d_{h}(\theta_h)+\alpha-v_{h}-\mu_{I}\phi_h(\hat{\theta})\\
\sigma^l(\hat{\theta}, \theta_h, \theta_l, \alpha)&\equiv&d_{l}(\theta_l)-v_{l}-\mu_{I} \phi_l(\hat{\theta})\\
\sigma^s(\hat{\theta}, \theta_h, \theta_l, \alpha)&\equiv&\mu_{I} \tilde{\phi}(\hat{\theta},\theta_h, \theta_l)-(\mu_I-\mu_O)\phi(\hat{\theta})-\tilde{d}(\hat{\theta})-\tilde{c}(\hat{\theta})
\end{eqnarray}
where $\tilde{\phi}(\hat{\theta},\theta_h, \theta_l)=\left[\frac{\int_{\hat{\theta}}^{\theta_h}\theta f(\theta)d\theta}{F(\theta_h)-F(\hat{\theta})}-\frac{\int_{\theta_l}^{\hat{\theta}}\theta f(\theta)d\theta}{F(\hat{\theta})-F(\theta_l)}\right]$, $\tilde{d}(\hat{\theta})=d(1, h, \hat{\theta})-d(0, l, \hat{\theta})$. Note that, since $(\hat{\theta}^*, \theta_h^*, \theta_l^*)$ satisfies $(\ref{eq:d})$ and $(\ref{eq:f})$ in the equilibrium, 
\begin{eqnarray}
\sigma^x(\hat{\theta}^*, \theta_h^*, \theta_l^*, 0)=0
\end{eqnarray}
holds for each $x\in \{h, l, s\}$. Calculating the derivative of each $\sigma^x$ with respect to $\hat{\theta}$, $\theta_h$ and $\theta_l$,
\begin{eqnarray}
(\sigma^h_h, \sigma^h_l, \sigma^h_s)&=&\left(\frac{\partial d_{h}(\theta_h)}{\partial \theta_h}-\mu_{I} \frac{\partial \phi_h(\hat{\theta})}{\partial \theta_h},\;\;0,\;\;-\mu_{I}\frac{\partial \phi_h(\hat{\theta})}{\partial \hat{\theta}}\right)\\
(\sigma^l_h, \sigma^l_l, \sigma^l_s)&=&\left(0,\;\;\frac{\partial d_{l}(\theta_l)}{\partial \theta_l}-\mu_{I} \frac{\partial \phi_l(\hat{\theta})}{\partial \theta_l},\;\;-\mu_{I} \frac{\partial \phi_l(\hat{\theta})}{\partial \hat{\theta}}\right)\\ \nonumber
(\sigma^s_h, \sigma^s_l, \sigma^s_s)&=&\Biggl( \mu_{I} \frac{\partial \tilde{\phi}(\hat{\theta},\theta_h, \theta_l)}{\partial \theta_h},\;\; \mu_{I}\frac{\partial \tilde{\phi}(\hat{\theta},\theta_h, \theta_l)}{\partial \theta_l},\\ 
&&\mu_I \frac{\partial \tilde{\phi}(\hat{\theta},\theta_h, \theta_l)}{\partial \hat{\theta}}-(\mu_I-\mu_O)\frac{\partial \phi(\hat{\theta})}{\partial \hat{\theta}}+\frac{\partial \tilde{d}(\hat{\theta})}{\partial \hat{\theta}}-\frac{\partial c(\hat{\theta})}{\partial \hat{\theta}}\Biggl)
\end{eqnarray}
where $\sigma_x^y=\frac{\partial \sigma^x}{\partial \theta_y}$ for each $x,y\in \{h, l, s\}$ and $\theta_s=\hat{\theta}$. By the stability of the equilibrium and lemma 2,3, we have in the followings in the equilibrium:
\begin{eqnarray}
\sigma_h^h<0,\;\;\sigma_l^h=0,\;\;\sigma_s^h>0,\;\;\sigma_h^l=0,\;\;\sigma_l^l<0,\;\;\sigma_s^l<0,\;\;\sigma_h^s>0,\;\;\sigma_l^s<0,\;\;\sigma_s^s>0.
\end{eqnarray}
Moreover, calculating the Jacobian determinant of $\sigma$,
\begin{eqnarray}
|J_\sigma|=
\begin{vmatrix}
\sigma_h^h &\sigma_l^h &\sigma_s^h\\
\sigma_h^l &\sigma_l^l &\sigma_s^l\\
\sigma_h^s &\sigma_l^s &\sigma_s^s
\end{vmatrix}
=\sigma_h^h(\sigma_l^l \sigma_s^s-\sigma_l^s \sigma_s^l)-\sigma_h^s \sigma_l^l \sigma_s^h>0.
\end{eqnarray}
By $(10)$ and $(16)$, we can apply implicit function theorem. Since $(\sigma^h_{\alpha}, \sigma^l_{\alpha}, \sigma^s_{\alpha})=(1,0,0)$, we can have in the equilibrium 
\begin{eqnarray}
\nonumber
\frac{\partial \hat{\theta}}{\partial \alpha}|_{\alpha=0} &=&-
\begin{vmatrix}
\sigma_h^h &\sigma_l^h &1\\
\sigma_h^l &\sigma_l^l &0\\
\sigma_h^s &\sigma_l^s &0
\end{vmatrix}
/|J_\sigma|=-(\sigma_h^l \sigma_l^s-\sigma_h^s \sigma_l^l)/|J_\sigma|<0, \\ \nonumber
\frac{\partial \theta_h}{\partial \alpha}|_{\alpha=0} &=&-
\begin{vmatrix}
1 &\sigma_l^h &\sigma_s^h\\
0 &\sigma_l^l &\sigma_s^l\\
0 &\sigma_l^s &\sigma_s^s
\end{vmatrix}
/|J_\sigma|=-(\sigma_l^l \sigma_s^s-\sigma_l^s \sigma_s^l)/|J_\sigma|>0, \\ \nonumber
\frac{\partial \theta_l}{\partial \alpha}|_{\alpha=0} &=&-
\begin{vmatrix}
\sigma_h^h &1 &\sigma_s^h\\
\sigma_h^l &0 &\sigma_s^l\\
\sigma_h^s &0 &\sigma_s^s
\end{vmatrix}
/|J_\sigma|=-(\sigma_h^s \sigma_s^l-\sigma_h^l \sigma_s^s)/|J_\sigma|>0.
\end{eqnarray}
\end{proof}

\begin{proof}[Proof of Proposition 3:\nopunct]
Fix a partially-separating equilibrium characterized by $(\hat{\theta}^*, \theta_h^*, \theta_l^*)$. Let the increase of cost for an effort in $l$ be $\beta$; that is each type $\theta$ making an effort in $l$ incurs the cost $d_{l}(\theta)+\beta$. Now as in the proof of prop 2 define the following functions:
\begin{eqnarray}
\sigma^h(\hat{\theta}, \theta_h, \theta_l, \alpha)&\equiv&d_{h}(\theta_h)-v_{h}-\mu_{I}\phi_h(\hat{\theta})\\
\sigma^l(\hat{\theta}, \theta_h, \theta_l, \alpha)&\equiv&d_{l}(\theta_l)+\beta-v_{l}-\mu_{I} \phi_l(\hat{\theta})\\
\sigma^s(\hat{\theta}, \theta_h, \theta_l, \alpha)&\equiv&\mu_{I} \tilde{\phi}(\hat{\theta},\theta_h, \theta_l)-(\mu_I-\mu_O)\phi(\hat{\theta})-\tilde{d}(\hat{\theta})-\tilde{c}(\hat{\theta})
\end{eqnarray}
Since $(\sigma_\beta^h, \sigma_\beta^l, \sigma_\beta^s)=(0,1,0)$ and the others' derivatives $\sigma_x^y$ is same as prop 2, by implicit function theorem, we have
\begin{eqnarray}
\nonumber
\frac{\partial \hat{\theta}}{\partial \beta}|_{\beta=0} &=&-
\begin{vmatrix}
\sigma_h^h &\sigma_l^h &0\\
\sigma_h^l &\sigma_l^l &1\\
\sigma_h^s &\sigma_l^s &0
\end{vmatrix}
/|J_\sigma|=-(\sigma_h^s \sigma_l^h-\sigma_h^h \sigma_l^s)/|J_\sigma|>0, \\ \nonumber
\frac{\partial \theta_h}{\partial \beta}|_{\beta=0} &=&-
\begin{vmatrix}
0 &\sigma_l^h &\sigma_s^h\\
1 &\sigma_l^l &\sigma_s^l\\
0 &\sigma_l^s &\sigma_s^s
\end{vmatrix}
/|J_\sigma|=-(\sigma_l^s \sigma_s^h-\sigma_l^h \sigma_s^s)/|J_\sigma|>0, \\ \nonumber
\frac{\partial \theta_l}{\partial \beta}|_{\ \beta=0} &=&-
\begin{vmatrix}
\sigma_h^h &0 &\sigma_s^h\\
\sigma_h^l &1 &\sigma_s^l\\
\sigma_h^s &0 &\sigma_s^s
\end{vmatrix}
/|J_\sigma|=-(\sigma_h^h \sigma_s^s-\sigma_h^s \sigma_s^h)/|J_\sigma|>0.
\end{eqnarray}

\end{proof}

\begin{proof}[Proof of Proposition 4:\nopunct]
Fix a partially-separating equilibrium characterized by $(\hat{\theta}^*, \theta_h^*, \theta_l^*)$. Let the increase of cost for entering $h$ be $\gamma$; that is each type $\theta$ joining $h$ incurs the cost $\tilde{c}(\theta)+\gamma$. Now as in the proof of prop 2,3 define the following functions:
\begin{eqnarray}
\sigma^h(\hat{\theta}, \theta_h, \theta_l, \alpha)&\equiv&d_{h}(\theta_h)-v_{h}-\mu_{I}\phi_h(\hat{\theta})\\
\sigma^l(\hat{\theta}, \theta_h, \theta_l, \alpha)&\equiv&d_{l}(\theta_l)-v_{l}-\mu_{I}\phi_l(\hat{\theta})\\
\sigma^s(\hat{\theta}, \theta_h, \theta_l, \alpha)&\equiv&\mu_{I}\tilde{\phi}(\hat{\theta},\theta_h, \theta_l)-(\mu_I-\mu_O)\phi(\hat{\theta})-\tilde{d}(\hat{\theta})-\tilde{c}(\hat{\theta})-\gamma
\end{eqnarray}
Since $(\sigma_\gamma^h, \sigma_\gamma^l, \sigma_\gamma^s)=(0,0,-1)$ and the others' derivatives $\sigma_x^y$ is same as prop 2, by implicit function theorem, we have
\begin{eqnarray}
\nonumber
\frac{\partial \hat{\theta}}{\partial \gamma}|_{\gamma=0} &=&-
\begin{vmatrix}
\sigma_h^h &\sigma_l^h &0\\
\sigma_h^l &\sigma_l^l &0\\
\sigma_h^s &\sigma_l^s &-1
\end{vmatrix}
/|J_\sigma|=-(-\sigma_h^h \sigma_l^l+\sigma_h^l \sigma_l^h)/|J_\sigma|>0, \\ \nonumber
\frac{\partial \theta_h}{\partial \gamma}|_{\gamma=0} &=&-
\begin{vmatrix}
0 &\sigma_l^h &\sigma_s^h\\
0 &\sigma_l^l &\sigma_s^l\\
-1 &\sigma_l^s &\sigma_s^s
\end{vmatrix}
/|J_\sigma|=-(-\sigma_l^h \sigma_s^l+\sigma_l^l \sigma_s^h)/|J_\sigma|>0, \\ \nonumber
\frac{\partial \theta_l}{\partial \gamma}|_{\ \gamma=0} &=&-
\begin{vmatrix}
\sigma_h^h &0 &\sigma_s^h\\
\sigma_h^l &0 &\sigma_s^l\\
\sigma_h^s &-1 &\sigma_s^s
\end{vmatrix}
/|J_\sigma|=-(-\sigma_h^l \sigma_s^h+\sigma_h^h \sigma_s^l)/|J_\sigma|<0.
\end{eqnarray}

\end{proof}

\section*{References}
\begin{description}
\item Adriani, F. and Sonderegger, S. 2019. ``A theory of esteem based peer pressure.'' {\it Games and Economic Behavior}, 115, 314-335.
\item Babcock, Philip, and Mindy Marks. 2011. ``The Falling Time Cost of College : Evidence from Half a Century of Time Use Data." {\it Review of Economics and Statistics}, 93(2): 468-678.
\item Benabou, R. and J. Tirole. 2006. ``Incentives and Prosocial Behavior." {\it American Economic Review}, 96: 1652-1678.
\item Benabou, R. and J. Tirole. 2011. ``Laws and Norms." NBER Working Paper No. 17579.
\item Branscombe, N., Spears, R., Ellemers, N., and Doosje, B. 2002. ``Intragroup and Intergroup Evaluation Effects on Group Behavior." {\it Personality and Social Psychology Bulletin}, 28: 744-753.
\item Cho, I. K., \& and Kreps, D. M. (1987). Signaling games and stable equilibria. {\it Quarterly Journal of Economics}, 102, 179--221.
\item Dur, R. and J. van der Weele. 2012. ``Status-Seeking in Criminal Subcultures and the Double Dividend of Zero-Tolerance." {\it Journal of Public Economic Theory}, 15(1): 77-93.
\item E. Lazarova and D. Dimitrov. 2013. ``Status-seeking in hednic games with heterogeneous players." {\it Social Choice and Welfare}, 40: 1205-1229.
\item D. Kreps and R. Wilson. 1982. ``Sequential Equilibrium." {\it Econometrica}, 50: 863-894.
\item Hamada T. 2016. ``The Effect of College as a Signal on Sutudents' Activity." MIMEO. 
\item Heather J. Smith and Tom R. Tyler. 1996. ``Choosing the Right Pond: The Impact of Group Membership on Self-Esteem and Group-Oriented Behavior". {\it Journal of Experimental Social Psychology}, 33: 146-170.
\item Janabel, Jiger. 1996. ``When national ambition conflicts with reality: studies on Kazakhstan's ethnic relations." central Asian Survey 15(1), 5-21.
\item Jewitt, I. 2004. ``Notes on the Shape of Distributions." MIMEO, Oxford University.
\item MacLeod, W.B. and Urquiola, M. 2015. ``Reputation and school competition." {\it American Economic Review}, 105(11): 3471-3488.
\item Milchtaich, I. and Winter, E. 2002. ``Stability and Segregation in Group Formation". {\it Games and Economic Behavior}, 38: 318-346.
\item Sivanathan, Niro, Pettit, Nathan. C. 2010. ``Protecting the self through consumption: status goods as affirmational commodities". {\it Journal of Experimental Social Psychology}, 46, 564-570.
\item Tyler, Tom R. and E. Allan Lind. 1992. ``A relational model of authority in groups". {\it In Advances in Experimental Psychology}, 25: 115-191. New York: Academic Press.
\item Watts, A. 2007. ``Formation of segregated and integrated groups." {\it International Journal of Game Theory}, 35: 505-519.
\end{description}

\end{document}